\documentclass[a4paper]{article}

\usepackage[a4paper, margin=1in]{geometry}

\usepackage{hyperref}

\usepackage{times} 
\usepackage{helvet} 
\usepackage{courier} 
\usepackage{url} 
\usepackage{graphicx} 
\usepackage{caption} 

\usepackage{floatrow}
\newfloatcommand{capbtabbox}{table}[][\FBwidth]

\usepackage[toc,page]{appendix}
\usepackage{caption}
\usepackage{subcaption}
\usepackage{footnote}
\usepackage{tablefootnote}
\usepackage{multirow}
\usepackage{enumitem}
\usepackage{tikz}
\usepackage{enumitem}
\usepackage{graphicx}
\usepackage{algpseudocode}
\usepackage{mathtools}
\usepackage{listings}
\usepackage{amsmath,amssymb, mathtools, nccmath, amsthm, thmtools}
\usepackage{algorithm}

\algdef{SE}[DOWHILE]{Do}{doWhile}{\algorithmicdo}[1]{\algorithmicwhile\ #1}

\usepackage[natbibapa]{apacite}

\usepackage{subcaption}
\newcommand{\xp}{\ensuremath{x^+}}
\newcommand{\xm}{\ensuremath{x^-}}
\newcommand{\yp}{\ensuremath{y^+}}
\newcommand{\ym}{\ensuremath{y^-}}

\newcommand{\spii}{\ \ }

\newcommand{\spiv}{\ \ \ \ }
\newcommand{\spv}{\ \ \ \ \ }

\newcommand{\stexttt}[1]
  {{\tiny\tt #1}}

\renewcommand{\epsilon}{\varepsilon}

\newcommand\funcrestr[2]{{
  \left.\kern-\nulldelimiterspace 
  #1 
  \vphantom{\big|} 
  \right|_{#2} 
  }}

\newcommand{\nrd}{\textsc{nrd}}

\newcommand{\csp}{\operatorname{CSP}}
\newcommand{\rcci}[1]{\textsc{RCC-}\ensuremath{#1}}
\newcommand{\bigoh}{\mathcal{O}}

\newtheorem{theorem}{Theorem}
\newtheorem{definition}[theorem]{Definition}

\newtheorem{lem}[theorem]{Lemma}

\newtheorem{example}[theorem]{Example}
\newtheorem{col}[theorem]{Corollary}
\newtheorem{obs}[theorem]{Observation}

\newtheorem*{claim*}{Claim}

\newlist{thmlist}{enumerate}{1}
\setlist[thmlist]{label=(\roman{thmlisti}), ref=\thetheorem.(\roman{thmlisti}),noitemsep, align=right}

\title{Towards Single Exponential Time for Temporal and Spatial Reasoning: A Study via Redundancy and Dynamic Programming
}

\author{Victor Lagerkvist,
    Johanna Groven,
    Leif Eriksson
}

\date{
    Linköping University,
    581 83 Linköping, Sweden\\
    victor.lagerkvist@liu.se, johanna.groven@liu.se,
    leif.eriksson@liu.se \\[2ex]
}

\begin{document}

\maketitle

\begin{abstract}
\noindent The {\em region connection calculus} (\textsc{RCC}) and {\em Allen's interval algebra} (\textsc{IA}) are two well-known NP-hard spatial-temporal qualitative reasoning problems. 
They are solvable in $2^{\bigoh(n \log n)}$ time,
where $n$ is the number of variables, and $\textsc{IA}$ is additionally known to be solvable in $o(n)^n$ time. However, no improvement over exhaustive search is known for $\textsc{RCC}$, and if they are also solvable in single exponential time $2^{\bigoh(n)}$ is unknown. We investigate multiple avenues towards reaching such bounds. First, we show that branching is insufficient since there are too many {\em non-redundant} constraints. Concretely, we classify the maximum number of non-redundant constraints in \textsc{RCC} and \textsc{IA}. Algorithmically, we make two significant contributions based on dynamic programming (DP). The first algorithm runs in $4^n$ time and is applicable to a non-trivial, NP-hard fragment of \textsc{IA}, which includes the well-known {\em interval graph sandwich problem} of~\cite{IGSPpaper}.
For the richer \textsc{RCC} problem with 8 basic relations we use a more sophisticated approach which asymptotically matches the $o(n)^n$ bound for $\textsc{IA}$.
\end{abstract}

\section{Introduction}

{\em Infinite} structures are a cornerstone in the current artificial intelligence (AI) revolution, ranging from neural networks (which typically approximate functions over continuous structures) to classical formalisms such as {\em qualitative reasoning}.
Here, the basic objects are regions in e.g.\ space/time and the task is to determine whether a given configuration (with uncertainties) is consistent. These problems have rich expressive power, are generally NP-complete, and can be formulated as infinite-domain {\em constraint satisfaction problems} (CSPs). While they admit large tractable classes~\cite{dylla} their precise exponential complexity is poorly understood. 
Many finite-domain problems  can be solved in $2^{\bigoh(n)}$ time
by enumerating functions from $n$ objects (e.g., variables) to the fixed domain. 
This can often be matched with asymptotically tight lower bounds $2^{o(n)}$ (unless the {\em exponential-time hypothesis} fails~\cite{impagliazzo2001}).
For infinite domains exhaustive enumeration is more complicated, and for qualitative reasoning this only gives $2^{\bigoh(n^2)}$ time ($n$ is the number of variables) which is often improvable to $2^{\bigoh(n \log n)}$ time. Clearly, this is  {\em much} worse than $2^{\bigoh(n)}$, and the best lower bounds only rule out $2^{o(n)}$ algorithms~\cite{IncompRelation}.

In this paper we are interested in narrowing the gap between infinite and finite structures: when are infinite-domain CSPs solvable in $2^{\bigoh(n)}$ time? We study this via qualitative reasoning problems since they are well understood from the angle of classical complexity theory, model theory, and algebra~\cite{DBLP:journals/jair/BodirskyJ17}, and thus form a suitable microcosmos to tackle the $2^{\bigoh(n)}$ versus $2^{\bigoh(n \log n)}$ question. We concentrate on the two most well-known formalisms: {\em Allen's interval algebra} (\textsc{IA})~\cite{AllenAlgebra} and the {\em region connection calculus} (\textsc{RCC})~\cite{RCCIntroduction}.
Beating exhaustive search is possible for \textsc{IA} with the first breakthrough being a $\bigoh((1.0615n)^{n})$ time dynamic programming (DP) algorithm~\cite{OriginalPartitioningAlg}, subsequently improved to $o(n)^n$~\cite{lagerkvist2023a}.
For \rcci{5} and \rcci{8} the situation is worse: no improvement is known over enumerating certain orders, which solves \rcci{8} in $\bigoh((0.531n)^n)$ time and \rcci{5} in $\bigoh((0.368n)^n)$ time~\cite{IncompRelation}.

We consider two orthogonal approaches. First, if we could reduce the number of constraints $m$ from $\bigoh(n^2)$ towards $\bigoh(n)$ then $2^{\bigoh(m)}$ branching would suddenly be theoretically interesting. Thus, we want to identify constraints that are {\em redundant} in the sense that their removal does not affect the solution space (constructing {\em prime} instances). This is a classical problem in qualitative reasoning~\cite{dylla} and prime instances can often be computed quickly~\cite{DBLP:conf/ijcai/SioutisLC15,DBLP:conf/iske/HuGL00S21}. 
Curiously, while there are complexity results for identifying redundant constraints~\cite{LI201551}, nothing rigorous is known about the maximal number of non-redundant constraints.
To address this we describe the maximal $n$-variable prime instance of a given relation $R$ ($\nrd_{\{R\}}(n)$), initially studied due to its connection to VC-dimension and learnability~\cite{DBLP:conf/aaai/BessiereCK20}. We provide a complete classification of $\nrd_{\{R\}}(n)$ for all basic relations, and several natural macro relations in \textsc{IA} and \textsc{RCC}: $\nrd_{\{R\}}(n) \in \Theta(n^2)$ for most relations, while $\nrd_{\{R\}}(n) \in \Theta(n)$ is only reached when $R$ is the equality or the {\em meets} relation in \textsc{IA}. Thus, improvements to $n^2$ can be made but prime instances with only $\bigoh(n)$ constraints are in general impossible.

Our second contribution is algorithmic and here we primarily explore how DP, which is not sensitive to the number of constraints, can be exploited to get closer to $2^{\bigoh(n)}$. First, we consider a non-trivial (but NP-hard) fragment of \textsc{IA} based around strict partial orders $p$, $p^{-1}$ and an incomparability relation $\cap$. In particular this problem captures the well-known, NP-hard \textit{interval graph sandwich problem} (\textsc{IGSP}) 
with applications in e.g.\ molecular biology~\cite{IGSPpaper}. We present a DP algorithm that solves \textsc{IGSP} in $\bigoh(4^n)$ time (ignoring polynomial factors). 
The algorithm is based on partitioning the endpoints of intervals into two classes $V_<, V_>$ where endpoints gradually are moved from the right to the left without introducing inconsistencies. This is inspired by the DP algorithm by~\cite{OriginalPartitioningAlg} for the $\csp$ problem with first-order definable relations over $(\mathbb{Q}, <)$ of arity at most 3, but the generalization to interval relations is not without complications. This algorithm cannot (to the best of our knowledge) be generalized to \rcci{8} while preserving the single exponential running time, and, given that {\em no} improvement is known over exhaustive search we first concentrate on matching the $o(n)^n$ bound for \textsc{IA}~\cite{lagerkvist2023a}. 
A straightforward DP approach for \rcci{8}, such as iteratively building orderings, quickly runs into issues because it is unclear which intermediate states to build upon when multiple valid ones exist. Unlike our algorithm for IGSP , early choices in \rcci{8} can significantly impact later stages. We resolve this by storing all necessary states and the novel idea of comparing them based on what is {\em not yet solved}.

Hence, we not only explain {\em why} branching algorithms cannot give $2^{\bigoh(n)}$ bounds for qualitative reasoning problems but also identify the first fragment solvable in $2^{\bigoh(n)}$ time as well as giving the first significant improvement for \rcci{8}.

\section{Preliminaries}

We sometimes use the notation $2^{\bigoh(n)}$ to express running times $2^{f(n)}$ for $f \in \bigoh(n)$. This notation allows one to hide factors polynomial in the input size but it is sometimes also convenient to express running times dominated by $f(n)^n$ for reasonably small $f$, where the constants may be important. To suppress polynomial factors in the input size we write $\bigoh^*(f(n)^n)$. We assume that the reader is familiar with basic notions such as (strict) partial/total orders, reflexivity, symmetry, and transitivity. For a strict total or partial order $O = (S, <_O)$ we write $\leq_O$ for the non-strict variant of $<_O$, and vice versa if we are given a non-strict order $(S, \leq_O)$. Similarly, we write $=_O$ for the equality relation induced by a given order $O$. For a set $S$, we call a partition $(S_1,\ldots,S_K)$ that partitions $S$ into $k$ different sets $S_1,\ldots,S_k \subseteq S$ a $k$-partition of $S$.
    
\subsection{Constraint Satisfaction Problems}

In this paper we are exclusively concerned with binary relations over a fixed (but infinite) domain $\mathcal{D}$. A set of binary relations $\Gamma$ over $\mathcal{D}$ is called a {\em constraint language}.
If $\Gamma = \{R_1, \ldots, R_m\}$ is finite and (1) $\bigcup \Gamma = \mathcal{D}^2$ ({\em jointly exhaustive}), (2) $R_i \cap R_j = \emptyset$ for all $i \neq j$ ({\em pairwise disjoint}), (3) ${\sf Eq}_{\mathcal{D}} = \{(x, x) \mid x \in \mathcal{D}\} \in \Gamma$, (4) if $R_i \in \Gamma$ then $\Gamma$ also contains the {\em inverse} $R_i^{-1} = \{(x,y) \mid (y,x) \in R_i\}$, then $\Gamma$ is said to be a {\em partition scheme}. Partition schemes are the predominant way to define qualitative reasoning problems (cf. ~\cite{dylla}).
For a set $\Gamma' \subseteq \Gamma$ of relations we define its {\em macro-relation} as $R^U_{\Gamma'} \coloneqq \bigcup_{R \in \Gamma'} R$. We define the closure of $\Gamma$ under macro-relations as $\Gamma^U \coloneqq \bigcup_{\Gamma' \subseteq \Gamma} \{R^U_{\Gamma'}\}$. 

In the {\em constraint satisfaction problem} over a constraint language $\Gamma$ ($\csp(\Gamma)$) over a domain $\mathcal{D}$ we are given a set of variables $V$ and a set of constraints $C$ of the form $R(x,y)$ where $R \in \Gamma$ and $x,y \in V$, and want to know whether there exists a {\em satisfying assignment} $f \colon V \to \mathcal{D}$ such that $(f(x), f(y)) \in R$ for every constraint $R(x,y) \in C$. 
We are specifically interested in the case when $\Gamma$ is a partition scheme where $\csp(\Gamma)$ is in P, and then want to solve the richer (and generally NP-complete) problem $\csp(\Gamma^U)$. For these problems, if for $x,y \in V$, there are multiple constraints $R_S^U(x,y) \in C$ and $R_{S'}^U(x,y) \in C$, we can always polynomially reduce to an instance $(V,C')$ where we replace $R_S^U(x,y)$ and $R_{S'}^U(x,y)$ by $R_{S \cap S'}^U(x,y)$ in $C'$, which is still a $\csp(\Gamma^U)$ constraint by definition. We thus often assume that for each pair $x,y \in V$, there is at most one constraint $R_S^U(x,y) \in C$. This allows us to also
use the {\em relational network} representation where we represent an instance $(V,C)$ of $\csp(\Gamma^U)$ by a total function $f$ defined by $f(x,y) = S$ if $R^U_S(x,y) \in C$ and $f(x,y) = \Gamma$ otherwise. Note that if $|f(x,y)| = 1$ for each pair $x,y \in V$ where $f(x,y) \neq \Gamma$ then the instance can be viewed as a $\csp(\Gamma)$ instance. When this distinction is important then we sometimes refer to the former as a {\em multi relational network} and the latter as a {\em simple relational network}, or an {\em instantiation}. A simple relational network $f$ is said to be {\em consistent} with a $\csp(\Gamma^U)$ instance $(V,C)$ if $f(x,y) \in \{R_1, \ldots, R_m\}$ for each constraint $R^U_{\{R_1, \ldots, R_m\}}(x,y)$. We sometimes view a relational network as a graph $(V,E_C)$ with labeled edges $E_C = \{(x,y,f(x,y)) \mid x,y \in V, f(x,y) \neq \Gamma\}$.

We also need one important concept from model theory. First, given a partition scheme $\Gamma$ over $\mathcal{D}$ the {\em substructure} induced by $S \subseteq \mathcal{D}$ is the set of relations of the form $R \cap S^2$ for $R \in \Gamma$. Second, given two substructures $\mathcal{A}$ (over $A \subseteq \mathcal{D}$) and $\mathcal{B}$ (over $B \subseteq \mathcal{D}$) an {\em isomorphism} is a bijective function $h \colon A \to B$ where (1)  $(h(a_1), h(a_2)) \in R \cap B^2$ for all $(a_1, a_2) \in R \cap A^2$, for every $R \in \Gamma$, and (2) the inverse $h^{-1}$ satisfies $(h^{-1}(b_1), h^{-1}(b_2)) \in R \cap A^2$ for all $(b_1, b_2) \in R \cap B^2$, for every $R \in \Gamma$. An {\em automorphism} of $\Gamma$ is an isomorphism from $\Gamma$ to $\Gamma$. We then say that a partition scheme $\mathcal{A}$ is {\em homogeneous} if any isomorphism between two finite substructures can be extended to an automorphism of $\mathcal{A}$. For additional details, see e.g. the book by Bodirsky~\citeyearpar{Bodirsky_2021}.

\subsection{Qualitative Reasoning Problems}
Let $\mathbb{I}_\mathbb{Q} = \{[a,b] \mid a,b \in \mathbb{Q}, a < b\}$ be the set of all nonempty closed intervals on the rational line.~\cite{AllenAlgebra} introduced a partition scheme $\mathfrak{A}_{13} = \{p,m,o,s,f,d,e,d^{-1},f^{-1},s^{-1},o^{-1}, m^{-1},p^{-1}\}$ over $\mathbb{I}_\mathbb{Q}$, where $p, m, o, s, f, d, e$ stands for {\em precedes}, {\em meets}, {\em overlaps}, {\em starts}, {\em finishes}, {\em during}, and {\em equals}. All these relations can  be formally defined by their endpoints, e.g., for two intervals $x = [x^-,x^+]$ and  $y =  [y^-,y^+]$ then $x p y$ iff $x^+ < y^-$, $x^- < x^+$, and $y^- < y^+$. 

The thirteen relations and their equivalence to endpoint relations can be defined as seen in Table~\ref{tb:allen}. Note that the representation based on $\mathbb{I}_\mathbb{Q}$ for $\textsc{IA}$ chosen here is a homogeneous representation~\cite{IAHomogenousness}.

Fragments using macro-relations are also often studied. For this purpose, we here consider $\alpha \coloneqq R^U_{\{m,o\}},$ $ \alpha^{-1} \coloneqq R^U_{\{m^{-1},o^{-1}\}}, \subset \coloneqq R^U_{\{s,f,d\}}, \subset^{-1} \coloneqq R^U_{\{s^{-1},f^{-1},d^{-1}\}}$ and $\cap \coloneqq R^U_{\mathfrak{A}_{13} \setminus \{p,p^{-1}\}}$. These relations induce the language 
    $\mathfrak{A}_3 \coloneqq \{p, \cap, p^{-1} \}$ and
    $\mathfrak{A}_7 \coloneqq \{p, p^{-1}, \alpha, \alpha^{-1}, \subset, \subset^{-1},e \}$
which induces NP-hard problems $\csp(\mathfrak{A}^U_3)$ and $\csp(\mathfrak{A}^U_7)$~\cite{IGSPpaper}. We remark that all known maximal tractable subclasses of $\mathfrak{A}^U_{13}$ have been identified~\cite{dylla}.

\begin{figure}[b!]
    \begin{floatrow}
    \centering
      \capbtabbox{\small
    \begin{tabular}{|ll|c|l|}\hline
       Basic relation        &          & Example & Endpoints\\
\hline\hline
       $x$ precedes       $y$ & ${\sf p}$  & \stexttt{xxx\spv}    & $\xp<\ym$
 \\ \cline{1-2}
       $y$ preceded by         $x$ & ${\sf p}^{-1}$  & \stexttt{\spv yyy}   & \\
\hline
       $x$ meets         $y$ & ${\sf m}$   & \stexttt{xxxx\spiv}  & $\xp=\ym$
\\ \cline{1-2}
       $y$ met-by        $x$ & ${\sf m}^{-1}$  & \stexttt{\spiv yyyy} & \\
\hline
       $x$ overlaps      $y$ & ${\sf o}$   & \stexttt{xxxx\spii}  & $\xm<\ym<\xp$, \\ \cline{1-2}
       $y$ overl.-by $x$ & ${\sf o}^{-1}$  & \stexttt{\spii yyyy} & $\xp<\yp$
\\ \hline
       $x$ during        $y$ & ${\sf d}$   & \stexttt{\spii xxx\spii} & $\xm>\ym$, \\ \cline{1-2}
       $y$ includes      $x$ & ${\sf d}^{-1}$  & \stexttt{yyyyyyy}   & $\xp<\yp$ \\
\hline
       $x$ starts        $y$ & ${\sf s}$   & \stexttt{xxx\spiv}  & $\xm=\ym$,
\\ \cline{1-2}
       $y$ started by    $x$ & ${\sf s}^{-1}$  & \stexttt{yyyyyyy}   & $\xp<\yp$ \\
\hline
       $x$ finishes      $y$ & ${\sf f}$   & \stexttt{\spiv xxx} & $\xp=\yp$,
\\ \cline{1-2}
       $y$ finished by   $x$ & ${\sf f}^{-1}$  & \stexttt{yyyyyyy}   & $\xm>\ym$ \\
\hline
       $x$ equals        $y$ & $e$ & \stexttt{xxxx}      & $\xm=\ym$, \\
                                   &          & \stexttt{yyyy}      & $\xp=\yp$ \\
\hline
    \end{tabular}
    }{\caption{The thirteen basic relations in Allen's interval algebra for intervals $x=[x^-,x^+]$ and $y=[y^-,y^+]$. The endpoint relations $x^- < x^+$ and $y^- < y^+$ that are valid for
  all relations have been omitted.}
    \label{tb:allen}}
\ffigbox{
  \centering
  {\setlength{\tabcolsep}{1pt}
   \begin{tabular}{cccc}
     \begin{subfigure}[b]{0.235\columnwidth}
       \centering
       \begin{tikzpicture}[scale=0.75]
         \draw (0,0) circle (1);
         \draw[dashed] (0,0) circle (1);
         \node at (-0.33,0) {$X$};
         \node at (0.33,0)  {$Y$};
       \end{tikzpicture}
       \caption{\small ${\sf EQ}(X,Y)$}
     \end{subfigure} &
     \begin{subfigure}[b]{0.235\columnwidth}
       \centering
       \begin{tikzpicture}[scale=0.5]
         \draw (0,0.75) circle (1);
         \draw[dashed] (0,-0.75) circle (1);
         \node at (0,0.75)  {$X$};
         \node at (0,-0.75) {$Y$};
       \end{tikzpicture}
       \caption{\small ${\sf PO}(X,Y)$}
     \end{subfigure} &
     \begin{subfigure}[b]{0.235\columnwidth}
       \centering
       \begin{tikzpicture}[scale=0.5]
         \draw (0,1) circle (1);
         \node at (0,1)     {$X$};
         \draw[dashed] (0,-1) circle (1);
         \node at (0,-1)    {$Y$};
       \end{tikzpicture}
       \caption{\small ${\sf EC}(X,Y)$}
     \end{subfigure} &
     \begin{subfigure}[b]{0.235\columnwidth}
       \centering
       \begin{tikzpicture}[scale=0.5]
         \draw (0,1.1) circle (1);
         \node at (0,1.1)   {$X$};
         \draw[dashed] (0,-1.1) circle (1);
         \node at (0,-1.1)  {$Y$};
       \end{tikzpicture}
       \caption{\small ${\sf DC}(X,Y)$}
     \end{subfigure}
     \\[-2pt] 
     \begin{subfigure}[b]{0.235\columnwidth}
       \centering
       \begin{tikzpicture}[scale=0.7]
         \draw (-0.5,0) circle (0.5);
         \node at (-0.5,0) {$X$};
         \draw[dashed] (0,0) circle (1);
         \node at (0.5,0)  {$Y$};
       \end{tikzpicture}
       \caption{\small ${\sf TPP}(X,Y)$}
     \end{subfigure} &
     \begin{subfigure}[b]{0.235\columnwidth}
       \centering
       \begin{tikzpicture}[scale=0.7]
         \draw[dashed] (-0.5,0) circle (0.5);
         \node at (-0.5,0) {$Y$};
         \draw (0,0) circle (1);
         \node at (0.5,0)  {$X$};
       \end{tikzpicture}
       \caption{\small ${\sf TPP}^{-1}(X,Y)$}
     \end{subfigure} &
     \begin{subfigure}[b]{0.235\columnwidth}
       \centering
       \begin{tikzpicture}[scale=0.7]
         \draw (-0.3,0) circle (0.5);
         \node at (-0.3,0) {$X$};
         \draw[dashed] (0,0) circle (1);
         \node at (0.5,0)  {$Y$};
       \end{tikzpicture}
       \caption{\small ${\sf NTPP}(X,Y)$}
     \end{subfigure} &
     \begin{subfigure}[b]{0.235\columnwidth}
       \centering
       \begin{tikzpicture}[scale=0.7]
         \draw[dashed] (-0.3,0) circle (0.5);
         \node at (-0.3,0) {$Y$};
         \draw (0,0) circle (1);
         \node at (0.5,0)  {$X$};
       \end{tikzpicture}
       \caption{\small ${\sf NTPP}^{-1}(X,Y)$}
     \end{subfigure}
   \end{tabular}
  }
}{
  \caption{Illustration of the basic relations of \rcci{8} with two-dimensional disks.}
  \label{fig:rcc8}
}
\end{floatrow}
\end{figure}

\cite{RCCIntroduction} introduced a set of eight relations $\rcci{8} = \{EQ, \allowbreak PO, \allowbreak EC, \allowbreak DC, \allowbreak TPP, \allowbreak TPP^{-1}, \allowbreak NTPP, \allowbreak NTPP^{-1}\}$ (see Figure~\ref{fig:rcc8}).
We also consider the macro-relations $PP \coloneqq R^U_{\{TPP,NTPP\}}, PP^{-1} \coloneqq R^U_{\{TPP^{-1},NTPP^{-1}\}}$ and $DR \coloneqq R^U_{\{EC,DC\}}$. These relations induce the fragment $\rcci{5} = \{EQ, PO, DR,$ $PP, PP^{-1}\}$ of $\textsc{RCC}$ that also forms a partition scheme. 
$\csp(\rcci{5}^U)$ and $\csp(\rcci{8}^U)$ are both NP-hard 
and all maximal tractable subclasses have been identified (cf.~\cite{dylla}). 

We observe that a homogeneous representation of \textsc{RCC} is well-known to exist~\cite{RCCHomogenousness} and we assume to use this representation and domain throughout this paper. We emphasize that the precise technical details of this representation are not of importance to us, as we work from a relational perspective in this paper. For additional details and precise technical definitions of all \rcci{8} relations beyond the example in the main paper, we refer to the above article by Bodirksy and Wölfl~\citeyearpar{RCCHomogenousness}.

\section{Redundancy} \label{sec:kernandred}

If $\Gamma$ is a partition scheme where $\csp(\Gamma)$ is solvable in polynomial time then an arbitrary instance $(V,C)$, $|V| = n$, $|C| = m$, of $\csp(\Gamma^{U})$ can be solved in $2^{\bigoh(m)} \cdot (n + m)^{\bigoh(1)}$ time by exhaustive branching~\cite{IncompRelation}. Specifically, for $\csp(\mathfrak{A}_{13}^{U})$ we obtain the bound $2^{m \cdot \log_2 12} \cdot (n+m)^{\bigoh(1)}$, and for $\csp(\rcci{8}^U)$ the moderately better bound $2^{m \cdot \log_2 7} \cdot (n+m)^{\bigoh(1)}$. Unfortunately, since we may have a quadratic number of constraints, these bounds are asymptotically {\em worse} than baseline upper bounds of $2^{\bigoh(n \log n)} \cdot (n+m)^{\bigoh(1)}$ which can be obtained by enumerating orders~\cite{3ColtoAllensReduction,IncompRelation}. Hence, in this section we investigate methods for decreasing $m$, with a particular focus on 
 identification of {\em redundant} constraints. We give strong evidence that this method is not likely to bring $2^{\bigoh(m)}$ closer to $2^{\bigoh(n)}$. We stress that analyzing expected running times of branching algorithms equipped with sophisticated heuristics on such instances is a significantly harder problem.

Let $\Gamma$ be a finite constraint language over a domain $\mathcal{D}$, and let $(V,C)$ be a $\csp(\Gamma)$ instance (where $C$ is a set of constraints, i.e., not the network representation).
We say that $c \in C$ is {\em non-redundant} if there exists a satisfying assignment $I: V \rightarrow \mathcal{D}$ to $(V, C \setminus \{c\})$ that does not satisfy $(V, C)$, and we say that $(V, C)$ is {\em non-redundant}, or {\em prime}, if every constraint in $C$ is non-redundant. For each $n \geq 1$ define $\nrd_{\Gamma}(n)$ over the set of $n$-variable, non-redundant instances $\mathcal{I}_\Gamma(n)$ of $\csp(\Gamma)$ as 
$\nrd_{\Gamma}(n) \coloneqq \max \{|C| \mid (V,C) \in \mathcal{I}_\Gamma(n)\}$.
This function is well-defined since there exists a finite number of $n$-variable $\csp(\Gamma)$ instances, and, furthermore, we have $\nrd_{\Gamma}(n) \in \bigoh(n^2)$, as well as $\max_{R \in \Gamma}\nrd_{\{R\}}(n) \leq \nrd_{\Gamma}(n) \leq \Sigma_{R \in \Gamma} \nrd_{\{R\}}(n)$~\cite{DBLP:conf/aaai/BessiereCK20}.

We now continue classifying the non-redundancy by identifying $n$-variable non-redundant instances of all relations in $\mathfrak{A}_i$ and \rcci{j} for $i \in \{3,7,13\}$ and $j \in \{5,8\}$. In most cases we can show that these are indeed maximal. 

\begin{table}[b!]
    \begin{center}
      {\small
    \begin{tabular}{|c|c|c|c|c|c|}\hline
       \multirow{2}{*}{$\nrd_{\{R\}}(n)$} & \multicolumn{3}{|c|}{$\textsc{IA}$} & \multicolumn{2}{|c|}{$\textsc{RCC}$} \\
       \cline{2-6}
               &   $\mathfrak{A}_3$ & $\mathfrak{A}_7$ & $\mathfrak{A}_{13}$   & $\rcci{5}$ & $\rcci{8}$ \\
       \hline\hline
       \multirow{2}{*}{$\frac{n(n-1)}{2}$} &   \multirow{2}{*}{$-$} & \multirow{2}{*}{$-$} & \multirow{2}{*}{$-$} & $DR,$ & $EC,DC,$ \\
       &    & &  & $PO$ & $PO$ \\
       \hline
       $\Omega(\lceil\frac{n}{2}\rceil \cdot \lfloor\frac{n}{2}\rfloor)$ & $\cap$ & $\alpha$ & $o$ & $-$ & $TPP$ \\
       \hline
       \multirow{2}{*}{$\lceil\frac{n}{2}\rceil \cdot \lfloor\frac{n}{2}\rfloor$} &   \multirow{2}{*}{$p$} & \multirow{2}{*}{$p,\subset$} & $p,s,$ & \multirow{2}{*}{$PP$} & \multirow{2}{*}{$NTPP$} \\
        &   & & $f,d$ &  & \\
       \hline
       $2n-4$ &   $-$ & $-$ & $m$ & $-$  & $-$ \\
       \hline
       $n-1$ &   $-$ & $e$ & $e$ & $EQ$  & $EQ$ \\
       \hline
    \end{tabular}}
\caption{The non-redundancy $\nrd_{\{R\}}(n)$ for $n\geq3$ for all \textsc{IA} relations $R \in \mathfrak{A}_{i}$ for $i \in \{3,7,13\}$ and for all \textsc{RCC} relations $R \in \rcci{i}$ over $\mathbb{R}^d$ for $i \in \{5,8\}$. The leftmost column shows the $\nrd_{\{R\}}(n)$ value for all relations $R$ in that the corresponding rows. Each other column classifies $\nrd_{\{R\}}(n)$ for all relations of one of these algebras. Converse relations are omitted as they have the same $\nrd_{\{R\}}(n)$ as their converse. The symbol '-' marks brackets where an algebra has no applicable relation $R$.}
\label{tb:redundancy}
    \end{center}
\end{table}
\begin{theorem} \label{theo:mainred}
    The non-redundancy $\nrd_{\{R\}}(n)$ of all relations $R \in \mathfrak{A}_{i}$ for $i \in \{3,7,13\}$ and of all \textsc{RCC} relations $R \in \rcci{i}$ for $i \in \{5,8\}$ can be classified as seen in Table~\ref{tb:redundancy} for $n\geq 3$.
\end{theorem}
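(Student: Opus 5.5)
The classification splits into one upper bound and one matching lower-bound construction per row of Table~\ref{tb:redundancy}. Throughout, an instance of $\csp(\{R\})$ is a set of constraints $R(x,y)$, that is, a directed graph on $V$. Constraint $R(x,y)$ is non-redundant exactly when some assignment satisfies all other constraints but has $(f(x),f(y)) \notin R$. Converse relations need no separate treatment, since reversing every edge maps prime instances of $R$ to prime instances of $R^{-1}$. Homogeneity of the chosen representations of \textsc{IA} and \textsc{RCC} lets us argue purely about finite configurations (endpoint orders, respectively \rcci{8} networks). A constraint $R(x,y)$ is therefore redundant exactly when every consistent atomic completion of $C\setminus\{c\}$ puts $R$ on $(x,y)$. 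In the other direction, exhibiting non-redundancy only requires a satisfiable basic network, with $(x,y)$ labelled outside $R$, that satisfies the remaining constraints.

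\textbf{Lower bounds.}
\begin{itemize}
\item Row $\frac{n(n-1)}{2}$ ($DR, EC, DC, PO$): take the complete graph. Dropping one constraint $(x,y)$, assign $EQ$ (or $PO$, resp.\ $EC$) to $(x,y)$ and keep all other pairs in $R$. This basic network has to be checked path-consistent; for \rcci{8} and \rcci{5}, path-consistency of basic networks implies consistency.
\item Row $\lceil\frac n2\rceil\lfloor\frac n2\rfloor$ ($p, s, f, d, PP, NTPP, \subset$): take the complete bipartite graph $L\to U$ with $|L|=\lceil n/2\rceil$. For $p$, place the $L$-intervals in a pairwise-overlapping chain and the $U$-intervals likewise. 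Removing $(x,y)$, move $x$ and $y$ to overlap while all other pairs stay precedes. This is realisable because precedence between two sets with no internal constraints is ``interval order''-like. The analogous nested picture handles $d$ and $NTPP$.
\item Row $\Omega(\lceil\frac n2\rceil\lfloor\frac n2\rfloor)$ ($o, \alpha, \cap, TPP$): again a bipartite construction, realised for instance by endpoints arranged on two interleaved staircases; only an $\Omega$ bound is claimed.
\item Row $2n-4$ ($m$): meets is determined by equalities $x^+=y^-$ of endpoints. A prime instance corresponds to a forest-like structure of equalities among the $2n$ endpoints, giving $2n-4$ after accounting for the first left and last right endpoints.
\item Row $n-1$ ($e, EQ$): any instance is equivalent to a spanning forest of its equality graph, since transitivity makes every cycle edge redundant. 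This gives the upper bound $n-1$, attained by a path.
\end{itemize}

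\textbf{Upper bounds.} The bound $\frac{n(n-1)}{2}$ is trivial, since there is at most one constraint per unordered pair; $R(x,y)$ and $R(y,x)$ cannot coexist for asymmetric $R$. For the strict partial orders ($p$, $d$, $s$, $f$, $\subset$, $PP$, $NTPP$), use the classical fact that these relations are transitive. Hence any directed path $x\to z\to y$ makes $R(x,y)$ redundant. A prime instance is therefore a transitively reduced graph with no directed path of length $2$, i.e.\ every vertex is a pure source or a pure sink. Such a graph is bipartite, and Mantel/Turán gives at most $\lceil n/2\rceil\lfloor n/2\rfloor$ edges. For $m$, bound the number of edges by counting endpoint-equality classes. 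Each constraint merges classes, and a constraint is redundant when its equality is already implied by the others. The $2n$ endpoints with the $n$ strict inequalities $x^-<x^+$ give a graph where non-redundant merges number at most $2n-4$. The $-4$ comes from showing that a leftmost start point and a rightmost end point cannot be merged profitably, which needs $n\ge 3$ and a short case analysis.

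\textbf{Main obstacle.} I expect the hardest part to be the $\Omega(n^2)$ constructions for non-transitive relations such as $o$ and $TPP$, together with the consistency verifications in the $PO$ and $EC$ complete-graph instances. There, removing one edge must admit a relabelling that remains globally realisable. My first attempt is explicit coordinate (endpoint) assignments for \textsc{IA}, and explicit disk configurations in $\mathbb{R}^2$ for \textsc{RCC}, perturbing only the pair $(x,y)$. The second delicate point is the exact constant in the $2n-4$ bound for $m$.
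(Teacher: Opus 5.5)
Your overall route matches the paper's: the complete graph with $EQ$ placed on the dropped pair is exactly its identification trick, and you likewise use transitivity plus an extremal count for the strict orders, and endpoint classes for $m$, $e$, $EQ$. However, the upper bound for the transitive relations rests on a false step. Transitivity makes $R(x,y)$ redundant when $R(x,z),R(z,y)\in C$. That forbids a directed $2$-path \emph{together with} its shortcut edge, not directed $2$-paths as such. The chain $p(x_1,x_2),p(x_2,x_3)$ (the spanning-tree instance of Lemma~\ref{lem:redundancygeneralbounds}) is prime, yet $x_2$ is neither a source nor a sink, so ``pure sources and sinks, hence bipartite'' fails. What you actually get is that no triangle of the underlying graph is transitively oriented. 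A cyclic triangle is unsatisfiable, so a \emph{satisfiable} prime instance has a triangle-free underlying graph, and Mantel's theorem gives $\lfloor n^2/4\rfloor=\lceil n/2\rceil\lfloor n/2\rfloor$ (this is what the paper's ``by counting'' amounts to).

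You also never treat unsatisfiable prime instances, and this matters. Such an instance is a minimal unsatisfiable set. For these relations every acyclic instance is realisable (disjoint or nested intervals/disks along a linear extension), so a minimal unsatisfiable set is a single directed cycle, with up to $n$ constraints. That fits within $\lceil n/2\rceil\lfloor n/2\rfloor$ only for $n\ge 4$: at $n=3$ the triangle $p(x,y),p(y,z),p(z,x)$ is prime with $3>2$ constraints. Similarly, $m(x_1,x_2),\dots,m(x_n,x_1)$ is prime with $n$ constraints, exceeding $2n-4$ at $n=3$; the paper's ``$n-1$'' for cyclic instances is off by one. So the rows for the strict orders and for $m$ can only be established for $n\ge 4$, whether by your argument or by the paper's.

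For $m$, your lower bound gives no witness instance. Use the paper's $\{m(s,x_i),m(x_i,t)\mid i\in[n-2]\}$, whose endpoints form the four classes $\{s^-\}$, $\{s^+,x_1^-,\dots,x_{n-2}^-\}$, $\{x_1^+,\dots,x_{n-2}^+,t^-\}$, $\{t^+\}$. The upper bound also needs its key facts made explicit:
\begin{enumerate}
\item In a satisfiable instance, $m(x,y)$ is non-redundant iff $\{x^+,y^-\}$ is a bridge of the graph on the $2n$ endpoints whose edges are the identified pairs. Hence a prime instance yields a forest and $|C|=2n-(\text{number of classes})$.
\item A minimal class in the order induced on classes contains no end point. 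Since every identification pairs an end point with a start point, that class is a singleton $\{a^-\}$; dually, a maximal class is a singleton $\{b^+\}$.
\item For $n\ge 3$, some interval other than $a,b$ has its two endpoints in two further, distinct classes.
\end{enumerate}
This gives at least four classes and $|C|\le 2n-4$. Your remark about the leftmost start and rightmost end point only accounts for the two singleton classes.

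The $\Omega$ row is likewise only gestured at, but explicit coordinates are routine. For $o$, take $x_i=[0,2]$ and $y_j=[1,3]$, and break $o(x_k,y_\ell)$ with $x_k=[0,1.5]$, $y_\ell=[1.7,3]$.
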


To proof Theorem~\ref{theo:mainred}, we first show a more general result, yielding bounds for $\nrd_{\{R\}}(n)$ for well-behaved relations $R$. For a relation $R \subseteq \mathcal{D}^2$ over domain $\mathcal{D}$, 
we say that $R$ has {\em infinite height} if there exists a series $(x_i)_{i \in \mathbb{N}}$ of distinct elements in $\mathcal{D}$ such that $(x_i,x_{i+1}) \in R$ for all $i \in \mathbb{N}$. For relations that share this property and are well-behaved in some other fashions, we infer the following general result.
\begin{lem} \label{lem:redundancygeneralbounds}
    Let $\Gamma$ be a homogenous partition scheme over domain $\mathcal{D}$. Let $R \subset \mathcal{D}^2, R \in \Gamma$ be a binary relation with infinite height that is symmetric or antisymmetric and reflexive or antireflexive. Then, for $n>2$, we have 
    \[n-1 \leq \nrd_{\{R\}}(n) \leq \frac{n(n-1)}{2}.\]
\end{lem}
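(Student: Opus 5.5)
The lower bound will come from a path instance and the upper bound from counting unordered pairs of variables. For the upper bound we first observe that $R$ is antireflexive. Since $\Gamma$ is a partition scheme, it contains ${\sf Eq}_{\mathcal{D}}$ and its members are pairwise disjoint. A reflexive $R \in \Gamma$ meets ${\sf Eq}_{\mathcal{D}}$, so it must equal ${\sf Eq}_{\mathcal{D}}$. But ${\sf Eq}_{\mathcal{D}}$ has no chain of distinct elements and therefore does not have infinite height. Hence $R$ is antireflexive. We also fix, once and for all, a sequence $(a_i)_{i\in\mathbb{N}}$ of distinct elements with $(a_i,a_{i+1})\in R$, which exists by infinite height.

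For the lower bound $n-1 \le \nrd_{\{R\}}(n)$ we take variables $x_1,\dots,x_n$ and the path instance $C=\{R(x_j,x_{j+1}) \mid 1\le j<n\}$. To show that $c_i = R(x_i,x_{i+1})$ is non-redundant, we define
\[ I(x_j) = a_j \ \text{for } j \le i, \qquad I(x_j) = a_{j-1} \ \text{for } j > i. \]
Each $R(x_j,x_{j+1})$ with $j<i$ is mapped to $(a_j,a_{j+1})\in R$, and each one with $j>i$ is mapped to $(a_{j-1},a_j)\in R$, so $I$ satisfies $C\setminus\{c_i\}$. However, $c_i$ is mapped to $(a_i,a_i)$, which is not in $R$ by antireflexivity. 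So every constraint is non-redundant and the instance is prime with $n-1$ constraints. This step needs no homogeneity.

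For the upper bound we show that a prime instance $(V,C)$ contains at most one constraint per unordered pair of distinct variables, which gives $|C| \le \binom{n}{2}$. There are three cases.
\begin{enumerate}
\item \emph{A loop $R(x,x)$ in $C$.} By antireflexivity this constraint is unsatisfiable. Any other constraint $c$ is then redundant, because $C\setminus\{c\}$ still has no satisfying assignment. So $|C| \le 1$.
\item \emph{$R$ symmetric, with both $R(x,y)$ and $R(y,x)$ in $C$.} The two constraints are equivalent, so removing either one leaves the solution set unchanged. This contradicts primality.
\item \emph{$R$ antisymmetric, with both $R(x,y)$ and $R(y,x)$ in $C$.} Together they force $x=y$, which antireflexivity forbids, so $C$ is unsatisfiable. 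Then any constraint other than these two is redundant, giving $|C|\le 2 \le \binom{n}{2}$ for $n>2$.
\end{enumerate}
In every remaining case each unordered pair $\{x,y\}$ with $x \ne y$ carries at most one constraint, and the bound follows.

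I expect the main obstacle to be the degenerate unsatisfiable cases: the definition of non-redundancy behaves oddly for unsatisfiable instances, and the partition-scheme axioms have to be used to exclude a reflexive $R$. The homogeneity hypothesis seems unnecessary for these bounds, so I would not invoke it.
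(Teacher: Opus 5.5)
Your proof is correct, but the lower bound takes a different route from the paper's.

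\textbf{Upper bound.} This is essentially the paper's argument: count unordered pairs, then split on loops and on pairs $R(x,y),R(y,x)$. The one addition is that you first show the reflexive case is vacuous. A reflexive member of a partition scheme must be ${\sf Eq}_{\mathcal{D}}$, which has no infinite height. The paper instead treats the reflexive subcase explicitly.

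\textbf{Lower bound.} The paper takes a spanning tree and uses homogeneity to show every element has an $R$-successor, so an assignment can be propagated outward from a root. It then breaks a single edge $R(v,w)$ by sending $(v,w)$ to some pair outside $R$ (using $R \neq \mathcal{D}^2$) and extending separately on the two components. You instead take a directed path and shift the infinite-height chain so that $x_i$ and $x_{i+1}$ collapse onto $a_i$. Antireflexivity then violates exactly that edge.

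\textbf{What each route buys.} Your argument is more elementary. It never uses homogeneity, only a finite $R$-chain and antireflexivity, and edge orientation never becomes an issue. The paper's argument uses homogeneity but never uses infinite height, so it also works for the reflexive relation ${\sf Eq}_{\mathcal{D}}$. That matters, because the paper later invokes this lemma for the lower bounds of $EQ$ and $e$ in Theorem~\ref{theo:mainred}. Your antireflexivity observation shows that those two relations actually fall outside the lemma's hypotheses as stated. Your path instance would still cover them if you replaced the collapse by a split assignment: $x_j \mapsto a$ for $j \le i$ and $x_j \mapsto b$ with $b \neq a$ for $j > i$.
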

    \begin{proof}
    For the lower bound, let $(V,C)$ be an instance of $\csp(\{R\})$ such that $(V,E_C)$ is a spanning tree. We claim that $(V,C)$ is a non-redundant instance showing $\nrd_{\{R\}}(n) \geq n-1$. We have $|C| \geq n-1$ by construction, so it remains to show that $(V,C)$ is non-redundant. To see this, we first show that $(V,C)$ is satisfiable by constructing a satisfying assignment $\sigma$. As $R\neq \emptyset$, there must be a $(a,b) \in R$. As $\Gamma$ is a partition scheme, we have ${\sf Eq}_{\mathcal{D}} = \{(x,x) \mid x \in \mathcal{D}\} \in \Gamma$. Then, for any $c \in \mathcal{D}$, we have $(a,a),(c,c) \in {\sf Eq}_{\mathcal{D}}$. By homogeneity, we conclude that there exists an isomorphism between $a$ and $c$ that can be extended to an automorphism $\alpha$ on $\Gamma$ such that $\alpha(a)=c$ and thus $(c,\alpha(b)) \in R$. As $c$ was chosen arbitrarily, we conclude that for every $a \in \mathcal{D}$, there exists an $b \in \mathcal{D}$ such that $(a,b) \in R$. We can fix a root $v \in V$ and an assignment $\sigma(v) = a$ of $v$ to $a \in \mathcal{D}$. We know there must exist $(a,b)\in R$ for some $b \in \mathcal{D}$. For all children $w \in V$ of $v$ in $(V,E_C)$ we can thus safely fix $\sigma(w) = b$, satisfying all constraints $R(v,w) \in C$. Applying this recursively, we can construct an assignment for the entire instance, as $(V,E_C)$ is a tree.
    Take now any one constraint $R(v,w) \in C$. We again construct an assignment $\sigma$ that satisfies all constraints but $R(v,w)$, proving its non-redundancy. As $R \neq \mathcal{D}^2$, there is $(a,b) \in \mathcal{D}^2 \setminus R$. Fix $\sigma(v) = a$ and $\sigma(w) = b$. Consider now the instance $(V,C\setminus\{R(v,w)\})$ and its corresponding graph $(V,E_{C\setminus \{R(v,w)\}})$. Observe that as $(V,E_C)$ is a spanning tree, $(V,E_{C\setminus \{R(v,w)\}})$ consists of two connected components $(V_v,E_v),(V_w,E_w)$ that are spanning trees rooted at $v$ and $w$ respectively. As before we may then extend $\sigma$ such that it satisfies all contraints in these connected components, but not $R(v,w)$, concluding the proof.

    For the upper bound, assume there is a a non-redundant instance $(V,C)$ of $\csp(\{R\})$ with $|C| > \frac{n(n-1)}{2}$ for $|V| = n > 2$. By a simple counting argument, $C$ must contain a constraint $R(x,x) \in C$ for $x \in V$ or it must contain constraints $R(x,y) \in C$ and $R(y,x) \in C$ for some $x,y \in V$. Assume there is $x \in V$ with $R(x,x) \in C$. If $R$ is antireflexive, $(V,C)$ is unsatisfiable and all constraints but $R(x,x)$ are redundant. Thus, $(V,C)$ is not a non-redundant instance of $\csp(\{R\})$ (as $n>2$). If $R$ is reflexive, every solution to $(V,C)$ also satisfies $(V,C\setminus\{R(x,x)\})$ and $(V,C)$ is not a non-redundant instance of $\csp(\{R\})$. Next, assume there is $x,y \in V$ such that we have $R(x,y) \in C$ and $R(y,x) \in C$. If $R$ is antisymmetric, $(V,C)$ is unsatisfiable and all constraints but $R(x,y)$ and $R(y,x)$ are non-redundant (as $n>2$). Thus, $(V,C)$ is not a non-redundant instance of $\csp(\{R\})$. If $R$ is symmetric, every solution to $(V,C)$ also satisfies $(V,C\setminus\{R(x,y)\})$ and $(V,C)$ is not a non-redundant instance of $\csp(\{R\})$. We conclude that no non-redundant instance $(V,C)$ of $\csp(\{R\})$ with $|C| > \frac{n(n-1)}{2}$ can exist for $|V| = n > 2$. Thus, we also have  $\nrd_{\{R\}}(n) \leq \frac{n(n-1)}{2}$ and have proven our claim.
    \end{proof}

    With this lemma in place, we can now present the proof for Theorem~\ref{theo:mainred}.
    
    \begin{proof}[Proof of Theorem~\ref{theo:mainred}]
    The settings of \textsc{IA} and \textsc{RCC} fulfill the conditions of Lemma~\ref{lem:redundancygeneralbounds}.
    Additionally, as all considered relations are reflexive or antireflexive as well as symmetric, antisymmetric or both and have infinite height, we have $n-1 \leq \nrd_{\{R\}}(n) \leq \frac{n(n-1)}{2}$ by Lemma~\ref{lem:redundancygeneralbounds}.

For relations $R$ for which we claim $\nrd_{\{R\}}(n) = \frac{n(n-1)}{2}$, consider the instance $(V,C)$ with $V = \{x_1,\ldots,x_n\}$ and $C = \{R(x_i,x_j) \mid i,j \in [n], i < j \}$. We claim that this instance is a non-redundant instance for $R \in \{DR,EC,DC,PO\}$. The proofs for all choices of $R$ are analogous in construction, so we show this here as an example for $R = \{DC\}$, for which this is simplest. 
Let $DC(x_k,x_\ell) \in C$ be an arbitrary constraint. Consider the assignment $\sigma$ that assigns each $x_i, i \in [n]$ a distinct region such that all pairs of regions $\sigma(x_i),\sigma(x_j)$ are disjoint. The precise shape of the regions is not important for this proof. Then $\sigma$ is clearly a satisfying assignment for $(V,C)$ and $(V,C\setminus\{DC(x_k,x_\ell)\})$. Now consider the assignment $\sigma'$ that behaves just like $\sigma$ but with $\sigma'(x_k) = \sigma(x_\ell)$. Clearly, $\sigma'$ is still a satisfying assignment for $(V,C\setminus\{DC(x_k,x_\ell)\})$ but not for $(V,C)$, as all regions $\sigma(x_i)$ are pairwise disjoint but $\sigma(x_k),\sigma(x_\ell)$, which are now identical. By definition, $DC(x_k,x_\ell)$ is thus a non-redundant constraint in $C$. Since we assumed $DC(x_k,x_\ell)$ to be an arbitrary constraint in $C$, all constraints in $C$ are redundant and $(V,C)$ is a non-redundant, $n$-variable instance of $\csp(\{DC\})$. Analogously, for the other relations $R\in \{DR,EC,PO\}$, we choose a satisfying assignment where they pairwise fulfill the wanted property, e.g. being externally connected. Then the same principle applies that we cannot differentiate regions $x_k,x_\ell$ without the constraint $R(x_k,x_\ell)$ and can find a satisfying assignment mapping them to the same domain value that does not satisfy $(V,C)$. Finally, note that $|C| = \frac{n(n-1)}{2}$ and thus this proves $\nrd_{\{R\}}(n) \geq \frac{n(n-1)}{2}$, completing the proof.

Next, consider the relations $R$ for which we claim $\nrd_{\{R\}}(n) = \lceil\frac{n}{2}\rceil \cdot \lfloor\frac{n}{2}\rfloor$, namely $R \in \{p,\subset, s,f,d,PP,$ $NTPP\}$ and the corresponding inverse relations. For the upper bound, consider first the case where $n$ is even and consider the instance $(V,C)$ with $V=\{x_i,y_i\mid i \in [\frac{n}{2}]\}$ and $C=\{R(x_i,y_j)\mid i,j \in [\frac{n}{2}]\}$. By counting, $(V,C)$ is the only instance corresponding to a relational network $(V,E_C)$ that fits $\frac{n^2}{4}$ directed edges $(x,y,R)$ corresponding to constraints $R(x,y)$ into a graph without inducing either:
\begin{itemize}
    \item a cycle $(x_1,\ldots,x_k)$ with $R(x_i,x_{i+1 \mod k}) \in C$ for all $i \in [k]$ which would make the instance unsatisfiable and all constraints outside of this cycle are thus not non-redundant,
    \item a path $(x,x_1,\ldots,x_k,y)$ with $R(x,x_1),R(x_i,x_{i+1}),R(x_k,y),R(x,y) \in C$ for all $i \in [k-1]$ where $R(x,y)$ would not be non-redundant.
\end{itemize} 
However, by the same reasoning, there cannot exists a non-redundant instance with more than $\frac{n^2}{4}$ constraints, proving the claim. Let now $n \in \mathbb{N}$ be odd. We construct a similar instance $(V,C)$ with $V=\{x_i,y_j\mid i \in [\lceil\frac{n}{2}\rceil], j \in [\lfloor \frac{n}{2} \rfloor]\}$ and $C=\{R(x_i,y_j)\mid i \in [\lceil\frac{n}{2}\rceil], j \in [\lfloor \frac{n}{2} \rfloor]\}$. By analogous logic, we have a non-redundancy of at most $\lceil\frac{n}{2}\rceil \cdot \lfloor\frac{n}{2}\rfloor$ in this case. \\
For the lower bound, consider again the instances discussed above for even and odd $n$. Both are also non-redundant instances for these relations, showing the respective lower bound. Showing this for even and odd $n$ for all relations $R \in \{p,\subset, s,f,d,PP,NTPP\}$ would impact readability, and as the proofs are analogous, we show this as an example for even $n$ and $R = p$. Let thus $(V,C)$ be the instance from above for even $n$, i.e. with $V=\{x_i,y_i\mid i \in [\frac{n}{2}]\}$ and $C=\{R(x_i,y_j)\mid i,j \in [\frac{n}{2}]\}$. Consider an arbitrary constraint $R(x_k,y_\ell) \in C$ and the assignment $\sigma$ with
\begin{equation*}
    \sigma'(z) = \begin{cases}
        (0,1), & \text{if $z=x_i$ and $i \neq k$,}\\
        (0,2), & \text{if $z=x_i$ and $i = k$,}\\
        (3,4), & \text{if $z=y_i$ and $i \neq \ell$,}\\
        (2,4), & \text{if $z=y_i$ and $i = \ell$.}\\
    \end{cases}
\end{equation*}
For all $(x_i,y_j)$ with $i \neq k$ or $j \neq \ell$, we have $(\sigma(x_i),\sigma(y_j)) \in p$. Only if $i = k$ and $j = \ell$ do we have $(\sigma(x_i),\sigma(y_j)) \notin p$. Thus $\sigma$ is a satisfying assignment for $(V,C\setminus\{R(x_k,y_\ell)\})$ but not for $(V,C)$, proving that the constraint $R(x_k,y_\ell)$ is non-redundant. As this was an arbitrarily chosen constraint, all constraints in $C$ are non-redundant and $(V,C)$ is non-redundant instance witnessing $\nrd_{\{p\}}(n) \geq |C| = \lceil\frac{n}{2}\rceil \cdot \lfloor\frac{n}{2}\rfloor$.

For the relations $o,\alpha,TPP$, while we do not get the upper bound of $\lceil\frac{n}{2}\rceil \cdot \lfloor\frac{n}{2}\rfloor$ using the above construction, the above instance also provides the same lower bound of $\nrd_{\{R\}}(n) \geq \lceil\frac{n}{2}\rceil \cdot \lfloor\frac{n}{2}\rfloor$ by analogous argument.

For $m$, consider first the instance $(V,C)$ with $V = \{s,t,x_i \mid i \in [n-2]\}$ and $C=\{m(s,x_i), m(x_i,t) \mid i \in [n-2]\}$. Consider any interval $x_i = [x_i^-, x_i^+]$, as well as $s = [s^-,s^+]$ and $t = [t^-,t^+]$. Setting $x^- \neq s^+$ in an assignment satisfies $(V,{C\setminus\{m(s,x_i)\}})$ but not $(V,C)$. Similarly, setting $x^+ \neq t^-$ in an assignment satisfies $(V,{C\setminus\{m(x_i,t)\}})$ but not $(V,C)$. Thus $(V,C)$ is a non-redundant instance of $\csp(\{m\})$ and we have $\nrd_{\{m\}}(n) \geq |C| = 2n-4$. For the upper bound, note first that each constraint  $m([x^-,x^+],[y^-,y^+])$ enforces an equivalence upon the endpoints of the two intervals, namely $x^+ = y^-$. Let $(V,C)$ be a non-redundant instance of $\csp(\{m\})$ and w.l.o.g. let $V=\{x_1,\ldots,x_n\}$ with $x_i = [x_i^-, x_i^+]$ for all $i \in [n]$. Let furthermore $(x_i^-)^C_=$ be the equivalence class of $x_i^-$ under $C$ and respectively for $x_i^+$ for all $i \in [n]$. Any assignment that fulfills these equivalence classes is a satisfying assignment for $(V,C)$ and thus, $c = m([x^-,x^+],[y^-,y^+]) \in C$ is satisfiable if $(x_i^+)^C_= \neq (x_i^+)^{C \setminus \{c\}}_=$ and $(y_i^-)^C_= \neq (y_i^-)^{C \setminus \{c\}}_=$. Likewise, we also have $(y_i^-)^C = (x_i^+)^C_= = (x_i^+)^{C \setminus \{c\}}_= \cup (y_i^-)^{C \setminus \{c\}}$ if $c = m([x^-,x^+],[y^-,y^+]) \in C$. Note that we have $x_i^- < x_i^+$ for all $[x_i^-,x_i^+] \in V$ and we must have $(x_i^-)_=^C \neq (x_i^+)_=^C$ because of this in any satisfiable instance. Similarly, an instance is unsatisfiable exactly if there is a cycle $(x_1,\ldots,x_k)$ such that $m(x_i,x_{i+1 \mod k}) \in C$ for all $i \in [k]$. In this case, all constraints outside of such a cycle would not be non-redundant. As the number of non-redundant constraints in instances with such cycles are bounded by $n-1$, we thus want to avoid them to maximize the number of non-redundant constraints. For $n \geq 2$, there must then be at least three different equivalences under $C$ if an instance does not contain such a cycle. Finally, note that the only such instance where there are three different equivalence classes under $C$ for $n\geq 2$ is $(V,C)$ with $V=\{x,y\}$ and $C=\{m(x,y)\}$. Thus, for $n \geq 3$, we must have at least $4$ different equivalence classes of endpoints under $C$. Clearly, we then can have at most $2n-4$ non-redundant constraints, as each non-redundant constraint merges exactly two equivalence classes, which concludes the proof.

For $R\in \{EQ,e\}$, note that they enforce an equivalence between variables, i.e. if $C(x,y) = \{EQ\}$ is a constraint, then for the respective equivalence classes of $x,y$, we have $(x)_=^C = (y)_=^C$. Analogous to the above argument, we can thus prove an upper bound of $\nrd_{\{R\}}(n) \leq n-1$. The lower bound follows by Lemma~\ref{lem:redundancygeneralbounds}.
   \end{proof}

\section{Towards Single Exponential Time}

We have proven that branching, in the worst-case, is unlikely to improve upon $2^{\bigoh(n^2)}$ for  \rcci{i} ($i 
\in \{5,8\}$), $\mathfrak{A}_3$, and the full algebra $\mathfrak{A}_{13}$. This does not rule out other algorithmic ideas, however, and we now use DP to obtain $2^{\bigoh(n)}$ for $\csp(\mathfrak{A}_3^U)$ (Section~\ref{sec:ia}) and $o(n)^n$ for $\csp(\rcci{8}^U)$ (Section~\ref{sec:rcc}). We remark that the former is the first {\em unconditional} (cf. ~\cite{lagerkvist2022d}) single exponential algorithm for an NP-hard spatiotemporal reasoning problem
and the latter the first major improvement for \rcci{8}.

\subsection{$\mathfrak{A}_3$ and the Graph Interval Sandwich Problem} 
\label{sec:ia}
\begin{algorithm}[tb]
    \caption{Solving $\csp(\mathfrak{A}^U_3)$ in $\bigoh^*(4^n)$ time.}
    \label{alg:partCSPA3}
    \flushleft\hspace*{\algorithmicindent}\textbf{Input:} Variables $V$, Network $C:V^2 \rightarrow \mathfrak{A}^U_3$ \\
    \hspace*{\algorithmicindent}\textbf{Output:} $True$ exactly if the network is satisfiable
    \begin{algorithmic}[1]
    \State $V_e \leftarrow \{x^- \mid x \in V\} \cup \{x^+ \mid x \in V\}$
    \State $S \leftarrow \operatorname{Queue}((\emptyset, V_e))$
    \State $M \leftarrow \{(\emptyset, V_e)\}$
    \While{$S \neq \emptyset$ } \label{line:3:alg3partCSP}
    \State $(V_<,V_>) \leftarrow S.\operatorname{pop}()$
    \If{$V_< = V_e$} 
    \State \Return $True$
    \EndIf
    \ForAll{$x \in V_>$}
    \State $s' \leftarrow (V_< \cup \{x\}, V_> \setminus \{x\})$
    \If{$(V_e,<_{s'})$ is consistent and $s' \notin M$} \label{line:7:alg3partCSP}
    \State $S.\operatorname{append}(s')$
    \State $M \leftarrow M \cup \{s'\}$
    \EndIf
    \EndFor
    \EndWhile
    \State \Return $False$
    \end{algorithmic}
\end{algorithm}

It is known that every solution to an \textsc{IA} instance $(V,C)$ corresponds to an ordering of the $2n$ interval endpoints $x^-,x^+$ of intervals $x=(x^-,x^+) \in V$ on the rational line. In the case of $\mathfrak{A}_3$, all endpoints can moreover w.l.o.g. be assumed to be distinct  and every relation between intervals can be uniquely determined by only considering relations between three of the four endpoints, albeit this does not hold for an arbitrary selection of three endpoints. 
E.g. if we have $y^-<y^+<x^+$, we might either have $y \cap x$ or $y ~p ~x$, but if we have $y^-<x^-<y^+$, the relation of the intervals is $y \cap x$. 
Here, we construct all endpoint orderings in a DP fashion by moving endpoints from the right to the left one endpoint at a time, i.e. constructing the ordering linearly, appending one endpoint at a time to it. To ensure that each ordering still corresponds to a solution, we fix the relations for each endpoint $x$ moved by considering all triples of endpoints $(x,y^-,y^+)$, for all other intervals $(y^-,y^+)$. Still, constructing all possible endpoints orderings like this would take too long, as there are $2|V|!$ different orderings in the worst case. To work around this issue, we only consider one total order for each subset of $V$. When constructing the total order of endpoints, we partition $V$ into a set $V_<$ of endpoints that are already totally ordered by the so far constructed order and another set $V_>$ that is not yet ordered. E.g. if we have so far constructed the total order $x_1 < x_2 < x_3$, then $V_< = \{x_1,x_2,x_3\}$. For each 2-partitions
$(V_<,V_>)$, we then only save one total ordering giving us a runtime of $\bigoh^*(4^n)$. We show later that if there exists a total order on $V$, this approach suffices to find it. If there exists no total order corresponding to a satisfying instance, the instance must be unsatisfiable. Then, we will not find such a total order under this restriction either and rightfully reject the instance. Note also that for all $x \in V_<$, we have $x < y$ for all $y \in V_>$ in the final total order of all endpoints that extends the one corresponding to $(V_<,V_>)$.

Formally, let $(V,C)$ be a $\csp(\mathfrak{A}^U_3)$ instance and $V^- = \{x^- \mid x \in V\}, V^+=\{x^+ \mid x \in V\}$ the sets of endpoints. 
We relate to each 2-partition $s = (V_<,V_>)$ of $V^- \cup V^+$ created by moving endpoints $x_1,\ldots,x_\ell$ from $V^- \cup V^+$ to $\emptyset$ sequentially the partial order $(V^- \cup V^+, <_s)$ corresponding to the sequence $x_1,\ldots,x_\ell$ of endpoints used to construct it. As we consider one such sequence for each 2-partition $s = (V_<,V_>)$, $<_s$ is well-defined. We call $s$ {\em consistent} exactly if for all variables $x,y \in V$ and the relations $R_e(x,y)$ imposed by $(V^- \cup V^+, <_s)$, we have $R_e(x,y) \in \{R_1, \ldots, R_m\} \cup \{\bot\}$ for each $R^U_{\{R_1, \ldots, R_m\}}(x,y) \in C$, where $R_e(x,y) = \bot$ if the order of endpoints of $x,y$ in $(V^- \cup V^+, <_s)$ does not induce a relation in $\mathfrak{A}_3$.

\begin{theorem} \label{theo:A3singleexp}
    Algorithm~\ref{alg:partCSPA3} solves $\csp(\mathfrak{A}_3)$ in time $\bigoh^*(4^n)$.
\end{theorem}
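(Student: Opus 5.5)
The proof has three parts: the running time, soundness, and completeness. Completeness is the real content.

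\emph{Running time.} A state is a 2-partition $(V_<,V_>)$ of the $2n$ endpoints, so there are at most $2^{2n}=4^n$ states. The set $M$ ensures each state is enqueued at most once. Each popped state generates at most $2n$ successors. For each successor, the consistency test looks at every pair $x,y\in V$ and checks the relation induced by the partial order $<_{s'}$ against $C(x,y)$, which costs $\bigoh(n^2)$. The total is therefore $\bigoh^*(4^n)$.

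For this test to be well defined, the stored order of $s'$ must be the stored order of $s$ with $x$ appended on top, everything in $V_>$ lying above it. Consistency then means the following. For each pair, take the restriction of the order to the placed endpoints. Either it already determines a relation of $\mathfrak{A}_3$ (namely $p$, $p^{-1}$ or $\cap$), or it is still compatible with some allowed relation once the remaining endpoints are placed to the right. I would make this reading of $\bot$ precise first. The key point is that for any two intervals, knowing the relative order of the placed endpoints and that all unplaced ones come later determines the set of still-possible relations.

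\emph{Soundness.} Suppose the algorithm returns $True$. Then $V_<=V_e$, and the stored sequence is a total order of all $2n$ endpoints in which every pair induces a genuine relation of $\mathfrak{A}_3$ lying in $C(x,y)$. The placement of $x^-$ before $x^+$ must also be enforced; I will fold it into consistency as ``$\bot$ if $x^+$ is placed while $x^-$ is not''. Mapping the endpoints to $1,\dots,2n$ then yields a satisfying assignment.

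\emph{Completeness (main obstacle).} Suppose a solution exists. We may assume its $2n$ endpoints are distinct, giving a total order $\prec$. The difficulty is that the algorithm keeps only one order per partition, namely the first one reached, and that order may differ from the prefix of $\prec$.

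The key lemma I would prove is this: the consistency of a state $(V_<,V_>)$, and the set of consistent successors, depend only on the set $V_<$ and not on the internal order chosen. Equivalently, for each pair $x,y$, the information relevant to $\{p,\cap,p^{-1}\}$ is determined by which of $x^-,x^+,y^-,y^+$ lie in $V_<$. I would check this by case analysis on the subset $T=V_<\cap\{x^-,x^+,y^-,y^+\}$, assuming each placed interval has $x^-$ before $x^+$:
\begin{itemize}
\item $T=\{x^-,x^+\}$ forces $x\,p\,y$.
\item $T=\{x^-,y^-\}$, or $T$ containing three endpoints including both starts, forces $\cap$.
\item $T=\{x^-\}$ leaves $p$ and $\cap$ open.
\item $T$ equal to all four endpoints was reached through a three-element stage, which already fixed the relation.
\end{itemize}
This is exactly the remark in the text that three suitable endpoints determine the relation. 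It must be checked that the fixed relation does not depend on which endpoint was placed third, since that order is forgotten. This is the delicate point, and I would verify it exhaustively over the few cases.

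Given the lemma, induction on $k$ shows that the prefix set $P_k$ of the first $k$ endpoints under $\prec$ is enqueued as a state whose stored order, whatever it is, is consistent. The base case is $P_0=\emptyset$. For the step, the successor $P_{k+1}$ of $P_k$ is consistent for $\prec$, hence consistent for any stored order by the lemma. So it is either appended or already in $M$. Since the BFS only terminates once the queue is empty or a goal is found, $P_{2n}=V_e$ is eventually popped and the algorithm returns $True$.
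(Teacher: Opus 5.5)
There is a genuine gap: your decomposition matches the paper's (at most $4^n$ states, soundness from the final stored order, induction along the prefix sets $P_k$ of a solution order $\prec$), but the key lemma that carries your induction is false. The relation between two intervals is not determined by which of $x^-,x^+,y^-,y^+$ lie in $V_<$. Consequently, consistency of a state is not a function of $V_<$ alone.

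Your third bullet already fails. Take $\{x^-,x^+,y^-\}\subseteq V_<$ with $y^+\notin V_<$. The internal order $x^-<x^+<y^-$ gives $x\,p\,y$, while $x^-<y^-<x^+$ gives $x\cap y$. Once all four endpoints are placed, the relation can be any of $p,\cap,p^{-1}$ depending on the internal order. So if $C(x,y)=\{p\}$, the same set $V_<$ is consistent under one internal order and inconsistent under another. The ``delicate point'' you flag does not survive the check, and your step ``$P_{k+1}$ is consistent for $\prec$, hence consistent for any stored order by the lemma'' is unsupported. The remark in the text concerns the \emph{order} of three suitable endpoints, not the \emph{set} of placed ones.

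What is true, and what the paper's argument uses, is a weaker invariance: any two \emph{consistent} orders with the same $V_<$ admit exactly the same consistent extensions. Call an interval unstarted, open ($x^-$ placed only), or closed.
\begin{itemize}
\item Every pair whose relation depends on the internal order of $V_<$ (closed--closed, closed--open) is already fixed. Since the stored order passed the consistency test, these fixed relations are allowed.
\item Open--open pairs are always $\cap$, and closed--unstarted pairs are always $p$.
\item The relations newly fixed by appending an endpoint $e$ depend only on $V_<$ and $e$. Placing $y^-$ fixes $x\cap y$ for each open $x$; placing $y^+$ fixes $y\,p\,z$ for each unstarted $z$. All previously fixed relations are unchanged.
\end{itemize}
Hence, if $P_k$ is stored with \emph{some} consistent order, appending the next endpoint of $\prec$ fixes exactly the relations that $\prec$ fixes at that step. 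Those are allowed, so the successor is consistent. Replace your lemma by this statement and take ``the stored order of $P_k$ is consistent'' as the inductive hypothesis; the rest of your argument then goes through.
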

\begin{proof} 
    We already argued soundness above. It remains to prove that Algorithm~\ref{alg:partCSPA3} always finds a satisfying total order if it exists. Consider two different consistent partial orders $(V^- \cup V^+, <_s),(V^- \cup V^+, <_{s'})$ corresponding to $s=s'=(V_<,V_>)$ such that $(V^- \cup V^+, <_s)$ can be expanded to total order corresponding to a solution, while $(V^- \cup V^+, <_{s'})$ cannot. All relations including intervals $x \in V$ with $x^-,x^+ \in V_<$ are decided (i.e. fixed by $<_s,<_{s'}$) and as the orders are consistent, these cannot cause any inconsistencies in the future. Thus there must be some \textit{open} intervals $x,y \in V$ with $x^-,y^- \in V_<$ but $x^+,y^+ \in V_>$ such that $x^- <_s y^-$ but $y^- <_{s'} x^-$. But in either case, we already fix $x \cap y$ under $<_s,<_{s'}$ and $x,y$ have the same relations with all $z \in V$ with $z^-,z^+ \in V_>$ in the same expansions (i.e. if we append the same remaining endpoint orderings to $<_s,<_{s'}$). So $(V^- \cup V^+, <_s),(V^- \cup V^+, <_{s'})$ must either both be expandable to a satisfying total order or not. We conclude that if $(V,C)$ is satisfiable, at least one sequence $x_1,\ldots,x_{2n} \in V^- \cup V^+$ corresponding to a satisfying assignment for $(V,C)$ is \textit{found} (as a series of 2-partitions) by Algorithm~\ref{alg:partCSPA3}. 
    Finally, as the set $S$ contains at most every two partition of $V^- \cup V^+$, we have a bound of at most $2^{|V^- \cup V^+|} = 4^n$ iterations of the outer loop in line~\ref{line:3:alg3partCSP}.

    \begin{algorithm}[tb]
    \caption{Compute an instantiation $\sigma_s$ for $s = (V_<,V_>)$. Symmetric assignments are omitted.}
    \label{alg:partialinstantA3}
    \flushleft\hspace*{\algorithmicindent} \textbf{Input: A two-partition $(V_{<}, V_{>})$, A consistent instantiation $\sigma_{s'}$ for $s' = (V_{<} \setminus\{x\}, V_> \cup \{x\})$}
    \begin{algorithmic}[1]
    \State $\sigma_{s}: V^2 \rightarrow \{p,\cap,p^{-1}\}, (x,y) \mapsto \bot$
    \ForAll{$x,y \in V$ with $\sigma_{s'}(x,y) \neq \bot$}
    \State $\sigma_s(x,y) \leftarrow \sigma_{s'}(x,y)$
    \EndFor
    \If{$x$ is an interval start point}
    \ForAll{$y^+ \in (V_{<}\setminus\{x\})$}
    \State $\sigma_i(y,x) \leftarrow ~p$
    \EndFor
    \Else
    \ForAll{$y^- \in V_{>}$}
    \State $\sigma_i(x,y) \leftarrow ~p$
    \EndFor
    \EndIf
    \ForAll{$y^- \in V_{<}\setminus\{x\}$ with $y^+ \in V_{>}$}
    \State $\sigma_i(x,y) \leftarrow ~\cap$
    \EndFor
    \State \Return $\sigma_s$
    \end{algorithmic}
\end{algorithm}

It remains to proof that it is possible to compute whether $(V^- \cup V^+, <_s)$ is consistent in polynomial time. We do this by presenting an algorithm computing all relations $R_e$ fixed by a partial order of endpoints $(V^- \cup V^+, <_s)$ in polynomial time. Let $(V^- \cup V^+, <_s)$ be a partial order corresponding to a partition $s=(V_<,V_>)$ and let $(V_<,V_>)$ be formed from $s'=(V_<\setminus\{x\},V_>\cup\{x\})$ by moving $x$ from the right side to the left side. By induction, we have already shown that the partial order $(V^- \cup V^+, <_{s'})$ corresponding to $s'$ is consistent previously, which is true if $s' = (\emptyset, V^- \cup V^+)$ as we are dealing with a satisfiable instance by assumption.\\
We correspond to each partial order $(V^- \cup V^+, <_{t})$ a partial instantiation function $\sigma_t: V^2 \rightarrow \{p,\cap,p^{-1}\}$ corresponding to the endpoint relations $R_e$ fixed by that partial order already and call $\sigma_t$ consistent exactly if $(V^- \cup V^+, <_{t})$ is consistent. It can be easily seen that $\sigma_t$ can be computed by Algorithm~1 in polynomial time, i.e. the endpoint relations enforced by $(V^- \cup V^+, <_{t})$ are exactly those on which $\sigma_t$ as computed by Algorithm~1 is defined on.
To now compute whether $(V^- \cup V^+, <_s)$ is consistent, we first check whether $x$ is an interval endpoint $x^+$ and $V_<$ does not contain its respective startpoint $x^-$ yet and reject if this happens. In other cases, we compute $\sigma_s$ based on $\sigma_{s'}$ using Algorithm~\ref{alg:partialinstantA3}, rejecting if Algorithm~\ref{alg:partialinstantA3} overwrites any value of $\sigma_{s'}$, i.e. if $\sigma_s$ does not extend $\sigma_{s'}$. Finally, we check whether $\sigma_s$ is consistent with $(V,C)$, i.e. we also have $\sigma(x,y) \in \{R_1,\ldots,R_m\}$ for all constraints $R^U_{R_1,\ldots,R_M} \in C$ for all $x,y \in V$ where $\sigma_s$ is defined. If this is true, we accept and conclude that $(V^- \cup V^+, <_s)$ is consistent, otherwise we return that it is not. The correctness of this approach is obvious and follows from the above argument. Finally, Algorithm~\ref{alg:partialinstantA3} clearly runs in polynomial time.
\end{proof}

We can  extend this as follows. Call a graph $G=(V,E)$ an {\em interval graph} if and only if $(V,C_E)$ is a satisfiable $\csp(\mathfrak{A}_3^U)$ instance, where $\cap(v,w) \in C_E(v,w)$ if  $(v,w)\in E$ and 
            $R_{\{p,p^{-1}\}}^U(v,w) \in C_E$ otherwise.
In the {\em interval graph sandwich problem} (IGSP) we are then given
two graphs $G_1=(V,E_1)$ and $G_2=(V,E_2)$ with $E_1 \cap E_2 = \emptyset$, and want to decide if there is
an interval graph $G=(V,E)$ with $E_1 \subseteq E \subseteq E_1 \cup E_2$.
IGSP is known to be a special case of $\csp(\mathfrak{A}^U_3)$~\cite{IGSPpaper}, which, together with Theorem~\ref{theo:A3singleexp}, yields the following corollary.

\begin{col}
    $\textsc{IGSP}$ is solvable in $\bigoh^*(4^{n})$ time where $n = |V|$ is the number of vertices.
\end{col}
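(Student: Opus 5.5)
The plan is a direct reduction from IGSP to $\csp(\mathfrak{A}^U_3)$, after which Theorem~\ref{theo:A3singleexp} (read as applying to $\csp(\mathfrak{A}^U_3)$, as Algorithm~\ref{alg:partCSPA3} does) gives the running time.

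Given an IGSP instance $G_1=(V,E_1)$, $G_2=(V,E_2)$, I would build the network $(V,C)$ over the same variable set. For each unordered pair $v\neq w$ there are three cases:
\begin{itemize}
\item if $(v,w)\in E_1$, add $\cap(v,w)$;
\item if $(v,w)\in E_2$, add no constraint, i.e.\ the universal relation $R^U_{\{p,\cap,p^{-1}\}}$;
\item otherwise add $R^U_{\{p,p^{-1}\}}(v,w)$.
\end{itemize}
The number of variables is exactly $n=|V|$, and the construction is clearly polynomial.

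\textbf{Correctness.} Suppose an interval graph $G=(V,E)$ with $E_1\subseteq E\subseteq E_1\cup E_2$ exists. Take a satisfying assignment of $(V,C_E)$. Each pair then realises $\cap$ exactly when it is an edge of $E$, and $p$ or $p^{-1}$ otherwise. Pairs in $E_1$ are in $E$, so they get $\cap$. Pairs outside $E_1\cup E_2$ are outside $E$, so they get $p$ or $p^{-1}$. Hence $(V,C)$ is satisfied.

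Conversely, take a satisfying assignment of $(V,C)$ and let $E$ be the set of pairs whose intervals intersect, i.e.\ realise $\cap$. Since $\{p,\cap,p^{-1}\}$ is jointly exhaustive and pairwise disjoint, the same assignment satisfies $(V,C_E)$, so $G=(V,E)$ is an interval graph. The constraints force $E_1\subseteq E$ and $E\subseteq E_1\cup E_2$.

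\textbf{Running time.} Applying Algorithm~\ref{alg:partCSPA3} gives $\bigoh^*(4^n)$, where the polynomial overhead of the reduction is absorbed in the $\bigoh^*$.

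\textbf{Main obstacle.} The only delicate point is making sure the reduction preserves the variable count $n$ rather than introducing auxiliary variables, since the bound is exponential in the number of variables. The direct encoding above keeps exactly $n$ variables, so nothing further is needed.
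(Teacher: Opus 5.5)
Your proposal is correct and follows the paper's route. The paper cites \cite{IGSPpaper} for IGSP being a special case of $\csp(\mathfrak{A}^U_3)$ and then invokes Theorem~\ref{theo:A3singleexp}, whereas you write out that encoding ($\cap$ on $E_1$, no constraint on $E_2$, $R^U_{\{p,p^{-1}\}}$ elsewhere), verify both directions, and check the detail the paper leaves implicit: the encoding keeps exactly $n=|V|$ variables, so the $2n$ endpoints give the $2^{2n}=4^n$ bound.
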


\subsection{An Improved Algorithm for RCC-8}
\label{sec:rcc}
As the final result we present an improved DP algorithm for $\csp(\rcci{8}^U)$.
We recall that $\csp(\rcci{8}^U)$ can be solved in $\bigoh((0.531n)^n)$ time by enumerating so-called {\em ordered partitions}, or total orders~\cite{IncompRelation}. Here, the intuition is that  for a given total order $(V,<_T)$, if $x<_Ty$, then the relation between $x$ and $y$ in our 
instance is neither $EQ$, $TPP^{-1}$, nor $NTPP^{-1}$, and the resulting instance can be solved in polynomial time. I.e., total orders function as \emph{certificate}s. 
We thus need to significantly decrease how many total orders we need to handle.
We work bottom-up towards defining a recurrence relation $R_f(S)$, by focusing on how we compare different total orders to each other.
However, if one attempts to apply standard DP with total orders as states we quickly run into problems. 
Normally there are methods to quickly choose between two possible states, but this is not the case here since choices we make early can have a significant impact later.

\begin{example}\label{ex:RCCex}

\begin{figure}[b!]
\centering
  \begin{tikzpicture}[scale=0.8]

    \node[circle,draw, fill = gray!10] (a1) at (0,0) {$a_1$};
    \node[circle,draw, fill = gray!10] (a2) at (3,0) {$a_2$};
    \node[circle,draw, fill = gray!10] (a3) at (6,0) {$a_3$};
    
    \node[circle,draw, fill = gray!10] (b1) at (0,-2) {$b_1$};
    \node[circle,draw, fill = gray!10] (b2) at (3,-2) {$b_2$};
    \node[circle,draw, fill = gray!10] (b3) at (6,-2) {$b_3$};
    
    \draw[->] (a1) -- (b1) node[midway, left] {\tiny $\{TPP,TPP^{-1}\}$};
    \draw[->] (a2) -- (b2) node[midway, right] {\tiny $\{TPP,TPP^{-1}\}$};
    \draw[->] (a3) -- (b3) node[midway, right] [align=left]{\tiny $\{TPP,$\\\tiny $TPP^{-1}\}$};

    \draw[->] (a2)  edge[bend right = -30] node [left, pos=0.3] {\tiny $\{TPP^{-1}\}$} (b1);
    \draw[->] (a3)  edge[bend right = -30] node [left] {\tiny $\{TPP^{-1}\}$} (b2);

    \draw[<->] (a1) -- (a2) node[midway, above] {\tiny $\{PO\}$};
    \draw[<->] (a2) -- (a3) node[midway, above] {\tiny $\{PO\}$};
    \path[<->] (a1) edge[bend left] node [below] {\tiny $\{PO\}$} (a3);

    \path[->] (a1) edge[bend right = 45] node [left] {\tiny $\{TPP^{-1}\}$} (b3);

    \end{tikzpicture}
  \caption{Graphic representation of the instance described in Example~\ref{ex:RCCex}.}
  \label{fig:RCCex}
\end{figure}

Consider a $\csp(\rcci{8}^U)$ instance $(\{a_1,a_2,a_3,b_1,b_2,b_3\},C)$ with relational network $c$ given by $c(a_1,b_1)=\{TPP,TPP^{-1}\}\in C$, $c(a_2,b_2)=\{TPP,TPP^{-1}\}\in C$, $c(a_3,b_3)=\{TPP,TPP^{-1}\}\in C$, $c(a_1,b_3)=\{TPP^{-1}\}\in C$, $c(a_2,b_1)=\{TPP^{-1}\}\in C$, $c(a_3,b_2)=\{TPP^{-1}\}\in C$, $c(a_1,a_2)=\{PO\}\in C$, $c(a_2,a_3)=\{PO\}\in C$, and $c(a_1,a_3)=\{PO\}\in C$, as shown in Figure~\ref{fig:RCCex}.
Assume that we try to handle this with standard DP.
Also, assume we have placed variables $b_1$, $b_2$ and $b_3$, and now are placing $a_1$ and are hence assuming $b_1$ is $TPP$ of $a_1$.
Similarly, there is another case where we have already placed $a_1$, $b_2$ and $b_3$ and are now placing $b_1$ and are hence assuming $a_1$ is $TPP$ of $b_1$.
Both of these cases are valid, and in the standard case they both lead to the same state.
But assuming $a_1$ is $TPP$ of $b_1$ leads to an unsolvable case, while assuming $b_1$ is $TPP$ of $a_1$ does not.
Hence, if we choose to only remember the incorrect case of $a_1$ (is $TPP$ of $b_1$) we cannot guarantee correctness.
The obvious solution to the problem is to remember all partial solutions, but this results in an algorithm with a run time of the form $\bigoh(n)^n$, which does not improve the existing upper bound.
\end{example}

Hence, for each set $S\subseteq V$, $R_f(S)$ thus has to represent a set of total orders.
Our goal is hence to minimize the size of these sets without losing correctness.
We resolve this by comparing the total orders using \emph{what is not yet solved} which then allows us to only keep a subset of all possible total orders.

As a road map, we first define how an instance behaves under an assumed total order followed by
showing that a sub-instance can be solved in polynomial time and that the solution in some sense is as general as possible.
We then introduce the crucial concept of {\em inconsistency paths} (IPs):
a potential path of transitivity beginning in the part of the instance that has not yet been considered, but such that the path ends in the part that has already been handled.
Using IPs, we then see how we can compare total orders
and more importantly, how we can minimize how many of these total orders $R_f(S)$ needs to contain.
This works since if we have two total orders $T=(V_T,\leq_T)$ and $T'=(V_T,\leq_T')$ and we know that all IPs relevant for $T$ also exist in $T'$, then $T$ is at least as good as $T'$. 
Hence, by minimizing $R_f(S)$ according to this, we can guarantee that we keep enough to ensure that we can construct a solution to the problem (if one exists).
Last, we analyze the sizes of these minimal $R_f(S)$ in order to  achieve the desired complexity. 

We start by defining how  we apply a partial order to a given relational network.
\begin{definition} \label{def:FunderP}
 Given a $\csp(\rcci{8}^U)$ instance $I=(V,C)$ with multi relational network $f$ and a partial order $P=(V,\leq_P)$, we for all $x,y \in V$ define the instance 
 ${(P\circ f)(x,y) =}$
 \begin{equation*}
    \begin{cases}
        f(x,y)\setminus\{TPP^{-1},NTPP^{-1},EQ\}, &\text{ if } x<_Py, \\
        f(x,y)\setminus\{TPP,NTPP,EQ\}, &\text{ if } y<_Px, \\
        f(x,y)\cap\{EQ\}, &\text{ if } x=_Py, \\
        f(x,y), &\text{otherwise.}
    \end{cases}
\end{equation*}

Additionally, given two relational networks $f$ and $g$, let $(f\cap g)(x,y) = f(x,y)\cap g(x,y)$.
\end{definition}

Given this we can now show that if an arbitrary instance $I$ is a yes-instance (i.e. there exists an assignment satisfying $I$), there will be a total order $T$, such that $I$ under $T$ is also a yes-instance.

\begin{lem}\label{lem:IyesTyes}
If a $\csp(\rcci{8}^U)$ instance $I$ with multi relational network $f$ is a yes-instance then there is a total order $T$ such that $T\circ f$ is a yes-instance.
\end{lem}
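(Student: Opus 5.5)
We take a satisfying assignment of $I$, read off a total order from the satisfied relations, and check that the same assignment still satisfies $T\circ f$.

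Fix a satisfying assignment $\sigma\colon V \to \mathcal{D}$ of $I$. Let $g$ be the simple relational network it induces: $g(x,y)$ is the unique basic relation of \rcci{8} that holds between $\sigma(x)$ and $\sigma(y)$ (unique since \rcci{8} is a partition scheme). Satisfaction means $g(x,y)\in f(x,y)$ for every pair.

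Define a preorder $\preceq$ on $V$ by $x\preceq y$ iff $g(x,y)\in\{EQ,TPP,NTPP\}$. This is the non-strict part-of relation between the regions $\sigma(x)$ and $\sigma(y)$. It is reflexive ($g(x,x)=EQ$) and transitive, by the composition table: a (proper) part of a (proper) part is a (proper) part. Its equivalence classes are exactly the classes of $EQ$. Modulo $EQ$, $\preceq$ is antisymmetric, because $PP(x,y)$ and $PP(y,x)$ cannot hold together.

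Take any linear extension of this partial order on the $EQ$-classes. Refine it to a total preorder $T=(V,\leq_T)$ in which $x=_Ty$ iff $g(x,y)=EQ$; if a strict total order on the variables themselves is intended, order within $EQ$-classes arbitrarily. The lemma may be read with $T$ a total order in the non-strict sense of the preliminaries, which permits ties; the paper's Definition~\ref{def:FunderP} explicitly handles the case $x=_Ty$, which supports this reading.

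Now check that $g(x,y)\in (T\circ f)(x,y)$ for all $x,y$, so that $\sigma$ also satisfies $T\circ f$.

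1. If $x<_T y$, then $y\not\preceq x$, so $g(x,y)\notin\{EQ,TPP^{-1},NTPP^{-1}\}$. Hence $g(x,y)$ survives the removal and lies in $(T\circ f)(x,y)$.
2. The case $y<_Tx$ is symmetric.
3. If $x=_Ty$, then $g(x,y)=EQ\in f(x,y)\cap\{EQ\}$.

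So $T\circ f$ is a yes-instance, witnessed by the same $\sigma$.

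The main obstacle is the tie-breaking inside $EQ$-classes if $T$ must be strict. In that case, pairs with $g(x,y)=EQ$ but $x<_Ty$ would lose $EQ$. I would handle this in one of two ways. The first is to adopt the non-strict reading, which the definition's $=_P$ case suggests. The second is to first contract all $EQ$-related variables, which is harmless since $EQ$ is identity on $\mathcal{D}$ and can be substituted away. The transitivity facts for $PP$ should be cited from the standard \rcci{8} composition table rather than proved.
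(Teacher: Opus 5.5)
Your proof is correct and follows essentially the same route as the paper's: take a complete satisfying network $g$, read off the partial order given by the $EQ$, $TPP$, $NTPP$ relations, and topologically sort it into a total order in which $EQ$-related variables are tied. The paper likewise requires $x =_P y \Rightarrow x =_T y$ and elsewhere uses ``total order'' to mean an ordered partition, so your non-strict reading is the intended one; your explicit case check ($x <_T y$ rules out $g(x,y) \in \{EQ, TPP^{-1}, NTPP^{-1}\}$) is also stated more carefully than the paper's sketch, whose corresponding sentence appears to have the roles of $x$ and $y$ swapped.
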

\begin{proof} (Sketch)
Any relational network satisfying $I$ describes a partial order over $V$. Topologically sorting this partial order yields the desired properties.
More precisely, take a complete relational network $g$ over $V$, satisfying $I$, i.e., $|g(x,y)|=1$, $g(x,y) \in c(x,y)\in C$ for all $x,y\in V$ and $g$ being transitive.
Now any $TPP$, $NTPP$ and $EQ$ relations in $g$ describe a partial ordering $P=(V_P,<_P)$ over $V$.
Topologically sort $P$ to a total order $T$ such that if $x=_Py$ then $x=_Ty$.
Since $x\not <_Ty$ if $g(x,y)\in\{TPP,NTPP\}$, $g$ must be consistent with $T\circ f$ and hence $g$ also satisfies $T\circ f$, which proves the lemma.
\end{proof}

If $P=(V_P,\leq_P)$ is a partial order then we write $P<S$ for the partial order $(V_P\cup S,\leq_{P'})$ where $\leq_{P'} = \leq_P \cup \{(x,y)\,|\,x\in V_P \text{ and }y\in S\}$.
Hence, we may write $(P<S)\circ f$ if $V_P\cup S = V$.

\begin{definition} \label{def:FunderT}
 Given a $\csp(\rcci{8}^U)$ instance $I=(V,C)$ with multi relational network $f$ and a total order $T=(V_T,\leq_T)$ with $V_T\subseteq V$, we for all $x,y \in V$ define 
  ${((T<V\setminus V_T)\odot f)(x,y) =}$
 \begin{equation*}
    \begin{cases}
        f(x,y)\setminus\{TPP^{-1},NTPP^{-1},EQ\}, &\text{ if } x<_Ty, \\
        f(x,y)\setminus\{TPP,NTPP,EQ\}, &\text{ if } y<_Tx, \\
        f(x,y)\cap\{EQ\}, &\text{ if } x=_Ty, \\
        \rcci{8}, &\text{otherwise.}
    \end{cases}
\end{equation*}
Note that we assume that $x<_Ty$ for all $x\in V_T$ and $y \in V\setminus V_T$ to simplify the notation.

Additionally we say that $(T<V\setminus V_T)\odot f$ is \emph{reduced locally $k$-consistent} if for every set $S\subseteq V$ with $|S|\leq k$ then for every pair $x,y\in S$ and for every relation $r\in ((T<V\setminus V_T)\odot f)(x,y)$ then the instance $g$ over $S$ defined as $g(u,v)=((T<V\setminus V_T)\odot f)(u,v)$ for all $u,v\in S$, $u,v\neq x,y$ and $g(u,v) = r$, is a yes-instance.
Any relation $r\in ((T<V\setminus V_T)\odot f)(x,y)$ not satisfying the above criteria is \emph{$k$-excessive}. 
\end{definition}
While the two definitions of $(P<V\setminus V_P)\circ f$ and $(T<V\setminus V_T)\odot f$ may seem very similar, there are some key differences between them:
$(P<V\setminus V_P)\circ f$ can be hard to solve since if $P=(\emptyset,\emptyset)$ then $(P<V\setminus V_T)\circ f=f$, and $(P<V\setminus V_P)\circ f \Rightarrow f$. In contrast, as we will soon prove, $(T<V\setminus V_T)\odot f$ is tractable, but $(T<V\setminus V_T)\odot f \not \Rightarrow f$. 
Additionally, note that if $V_T=V$ then $T\odot f = T\circ f$. 
This means that $(T<V\setminus V_T)\odot f$ will be easy to work with and preferable to use, but we will not be sure we are actually on the correct track until $V_T=V$.
Given an arbitrary $\csp(\rcci{8}^U)$ instance $I=(V,C)$ with multi relational network $f$ and a total order $T=(V_T,<_T)$, let $\mathit{RLC_k}((T<V\setminus V_T)\odot f)$ be a function outputting $(T<V\setminus V_T)\odot f$ with all $k$-excessive relations removed.

\begin{lem} \label{lem:local_consistency}
The function $\mathit{RLC_k}((T<V\setminus V_T)\odot f)$ can be calculated in $\bigoh(poly(|V|))$ time for any constant $k$.
\end{lem}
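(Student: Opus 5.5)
The plan is to compute $\mathit{RLC_k}$ by a direct local-consistency style procedure that enumerates all small subsets, all pairs, and all candidate relations, and decides each resulting constant-size instance by brute force. Write $g = (T<V\setminus V_T)\odot f$. Building $g$ from $f$ and $T$ is clearly polynomial: for each of the $|V|^2$ pairs we compare $x,y$ in $T$ (or note that one of them lies outside $V_T$) and apply one of the four cases of Definition~\ref{def:FunderT}, at constant cost per pair.

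For the removal step we iterate over all $S\subseteq V$ with $|S|\le k$. There are $\sum_{i\le k}\binom{|V|}{i}\in\bigoh(|V|^k)$ such sets, which is polynomial since $k$ is constant. For each $S$ we consider every pair $x,y\in S$ (at most $k^2$) and every $r\in g(x,y)$ (at most $8$). We then form the instance $g'$ over $S$ that agrees with $g$ except that $g'(x,y)=r$; we also set $g'(y,x)=r^{-1}$, so that the network stays well-formed. We must decide whether $g'$ is a yes-instance of $\csp(\rcci{8}^U)$.

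This decision has constant cost. The instance has at most $k$ variables, so there are at most $8^{k^2}$ simple relational networks (instantiations) refining $g'$. Each refinement is a $\csp(\rcci{8})$ instance, and consistency of a basic \rcci{8} network is decidable in polynomial time (path consistency decides it, the standard fact the paper relies on when it says $\csp(\Gamma)$ is in P). Here this is bounded by a function of $k$ alone, so each check costs $\bigoh(1)$. Any $r$ for which no refinement of $g'$ is consistent is marked $k$-excessive.

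The total cost is therefore $\bigoh(|V|^k)\cdot k^2\cdot 8\cdot c(k) = \bigoh(poly(|V|))$.

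The one subtlety is the definition itself. It is stated with respect to $g$, not iteratively, so removing the marked relations simultaneously in a single pass matches the definition exactly. We collect all excessive triples $(x,y,r)$ first and only then delete them, so that no deletion affects the tests on other subsets within the same pass. If the intended reading were instead the fixpoint (repeat until nothing changes), polynomiality still holds: each round deletes at least one of at most $8|V|^2$ relations, so there are at most $8|V|^2$ rounds, each of polynomial cost.

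I expect the only real obstacle to be justifying that satisfiability of a constant-size \rcci{8} network is decidable at all, not just fast. This follows from tractability of basic \rcci{8} networks, or alternatively from homogeneity of the representation, which reduces satisfiability to checking finitely many realisable small configurations.
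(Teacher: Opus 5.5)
Your proposal is correct and follows the paper's proof: enumerate all $S\subseteq V$ with $|S|\le k$, every pair, and every candidate relation, decide each constant-size instance by brute force, and delete the relations that fail. The paper uses the fixpoint version ("continued until no more changes occur", bounded by the at most $8|V|^2$ possible deletions) rather than a single simultaneous pass, which you cover as your second reading; your $8^{k^2}$-refinements-plus-path-consistency argument spells out the constant-time check that the paper simply calls clear.
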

\begin{proof} 
Let $g=(T<V\setminus V_T)\odot f$. We claim that for any subset of $V$ of size $k$ any $k$-excessive relations can be removed in polynomial time by brute-force checking. As there are $|V|^k$ of these subsets it is all in $\bigoh(poly(|V|))$ which completes the proof.
To show this, we enumerate all $S\in V$ with $|S|\leq k$, and for every $x,y\in S$, we force a relation on $r\in g(x,y)$ and solve the remaining instance over $S$. If the instance is solvable, keep $r\in g(x,y)$, otherwise update $g(x,y)$ to $g(x,y)\setminus r$. This is continued until no more changes occur.
Enumerating $S$ can clearly be done in polynomial time since $k$ is constant.
Similarly, any $\csp(\rcci{8}^U)$ of size at most $k$ can clearly be solved in constant time.
Each iteration over all $S$ either decreased $|g(x,y)|$ by at least one for some $x,y\in V$, otherwise no change occurs and we stop.
Hence, the maximum number of iterations is bounded by a polynomial function of $|V|$.
\end{proof}

We order relations in the following way: $DC=\diamond_1$, $EC=\diamond_2$, $PO=\diamond_3$, $TPP=\diamond_4$, and $NTPP=\diamond_5$. Given  $\diamond \subseteq \{\diamond_1,\dots,\diamond_5\}$  we let $\diamond_{min}\in \diamond$ be the smallest element under this ordering.

 \begin{obs}\label{obs:simplercc8transiticty}
By taking the sequence $DC=\diamond_1$, $EC=\diamond_2$, $PO=\diamond_3$, $TPP=\diamond_4$, and $NTPP=\diamond_5$, and studying the transitivity of the relations of \rcci{8} under some total order $T$ we can observe the following important properties:
\begin{itemize}
    \item For every combination of $i\in[2,\dots,5]$ and $j\in[5]$, the set of allowed relations for $x\diamond z$ given $x\diamond_iy$ and $y\diamond_jz$ is a subset of the allowed relations for $x\diamond_{i-1}y$ and $y\diamond_jz$.
    \item For $i\in[5]$ and every pair $j,j'\in[5]$ with $j<j'$, let $\diamond_{i,j}$ and $\diamond_{i,j'}$ be the sets of relations allowed between $x$ and $z$ given $x\diamond_iy$ and $y\diamond_jz$, and given $x\diamond_iy$ and $y\diamond_{j'}z$ respectively. For every $\diamond_k \in \diamond_{i,j'}$ then either $\diamond_k \in \diamond_{i,j}$, or for all $\diamond_{k'} \in \diamond_{i,j}$  we have $k'<k$. 
\end{itemize}
 \end{obs}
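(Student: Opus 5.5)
The plan is to prove both items by direct inspection of the \rcci{8} composition table, restricted to what a total order allows. Fix a total order $T$ and three variables with $x <_T y <_T z$. By Definition~\ref{def:FunderP}, under $T$ the relations between $x$ and $y$, between $y$ and $z$, and between $x$ and $z$ all lie in $\{\diamond_1,\dots,\diamond_5\}=\{DC,EC,PO,TPP,NTPP\}$. So for every $i,j\in[5]$, the set $\diamond_{i,j}$ of allowed relations for $x\diamond z$ is the standard \rcci{8} composition $\diamond_i \circ \diamond_j$ intersected with $\{\diamond_1,\dots,\diamond_5\}$.

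I would invoke the standard composition table. In the homogeneous representation of~\cite{RCCHomogenousness} composition is exact (strong), so every listed relation is realizable and every unlisted one is impossible. This gives the following $5\times 5$ table, with rows indexed by $i$ and columns by $j$:
\begin{itemize}
\item Row $DC$: every entry is $\{DC,EC,PO,TPP,NTPP\}$.
\item Row $EC$: $\{DC,EC,PO\}$, $\{DC,EC,PO,TPP\}$, $\{DC,EC,PO,TPP,NTPP\}$, $\{EC,PO,TPP,NTPP\}$, $\{PO,TPP,NTPP\}$.
\item Row $PO$: $\{DC,EC,PO\}$, $\{DC,EC,PO\}$, all five, $\{PO,TPP,NTPP\}$, $\{PO,TPP,NTPP\}$.
\item Row $TPP$: $\{DC\}$, $\{DC,EC\}$, all five, $\{TPP,NTPP\}$, $\{NTPP\}$.
\item Row $NTPP$: $\{DC\}$, $\{DC\}$, all five, $\{NTPP\}$, $\{NTPP\}$.
\end{itemize}
Each entry comes from the full \rcci{8} entry by deleting $TPP^{-1}$, $NTPP^{-1}$ and $EQ$. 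For example, $EC\circ EC=\{DC,EC,PO,TPP,TPP^{-1},EQ\}$ becomes $\{DC,EC,PO,TPP\}$.

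The first item is then a column-by-column inclusion check between consecutive rows. For instance, in the $TPP$ column the $PO$ row $\{PO,TPP,NTPP\}$ is contained in the $EC$ row $\{EC,PO,TPP,NTPP\}$, and in the $EC$ column the $TPP$ row $\{DC,EC\}$ is contained in the $PO$ row $\{DC,EC,PO\}$.

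The second item is checked row by row. For each pair of columns $j<j'$, I verify that every element of $\diamond_{i,j'}\setminus\diamond_{i,j}$ has a larger index than every element of $\diamond_{i,j}$. In other words, moving to a later column may only add relations at the top of the ordering. Two examples: in row $EC$, going from column $DC$ to column $NTPP$ adds only $TPP$ and $NTPP$, which lie above $\{DC,EC,PO\}$. In row $TPP$, going from column $EC$ to column $TPP$ adds $TPP$ and $NTPP$, which lie above $\{DC,EC\}$. Rows $DC$ and $NTPP$ are immediate.

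The main obstacle is not the inclusion checks, which are mechanical over at most $5\cdot 4$ and $5\binom{5}{2}$ cases. It is justifying that the filtered table entries are exactly right, that is, that weak and strong composition coincide for \rcci{8} in the chosen homogeneous model. I would handle this by citing the known exactness of the \rcci{8} composition table in that model rather than reproving it, and only verifying the deletion of inverses and $EQ$ implied by $x<_Ty<_Tz$.
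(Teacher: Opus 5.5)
Your proposal is correct and follows essentially the paper's own (unstated) argument: the paper gives no proof beyond ``studying the transitivity'' of \rcci{8} under $T$, and your filtered $5\times 5$ table for $x<_T y<_T z$ is accurate and satisfies both items. The weak-versus-strong composition issue you flag as the main obstacle does not actually arise, because a base relation belongs even to a weak composition entry exactly when some triangle realizes it, and that is all ``allowed'' means here.
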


 We now establish correctness of Algorithm~\ref{alg:polyRN} which shows that  $(T<V\setminus V_T)\odot f$ can be solved in polynomial time.

\begin{algorithm}[tb]
    \caption{Solving $(T<V\setminus V_T)\odot f$ in $\bigoh(poly(||I||))$}
    \label{alg:polyRN}
    \flushleft\hspace*{\algorithmicindent} \textbf{Input: Multi relational network $g=(T<V\setminus V_T)\odot f$}
    \begin{algorithmic}[1]
    \State $g' \leftarrow RLC_3(g)$
    \For{$x_i \in x_1<_T\ldots<_Tx_{|V_T|}$}
    \For{$x_j \in x_{i-1}>_T\ldots>_Tx_{1}$}
    \State $g'(x_j,x_i) \leftarrow \{g'(x_j,x_i)_{min}\}$\label{line:polyRNAssign}
    \State $g' \leftarrow RLC_3(g')$
    \EndFor
    \EndFor
    \State \Return $g'$
    \end{algorithmic}
\end{algorithm}

\begin{lem}\label{lem:todotfpoly}
Given a $\csp(\rcci{8}^U)$ network $(T<V\setminus V_T)\odot f$ for some total order $T=(V_T,\leq_T)$ and network $f$, Algorithm~\ref{alg:polyRN} returns $g(x,y)=\emptyset$ for all $x,v\in V_T$ if $(T<V\setminus V_T)\odot f$ is a no-instance.
Otherwise Algorithm~\ref{alg:polyRN} returns a relational network $g$ such that 
\begin{itemize}
    \item $g$ satisfies $(T<V\setminus V_T)\odot f$, and
    \item there is no relational network $g'\not = g$ satisfying $(T<V\setminus V_T)\odot f$ such that there is a pair $x,y\in V$ where if $g'(x,y)=\{\diamond_i\}$ and $g(x,y)=\{\diamond_j\}$ then $i<j$. 
\end{itemize}
\end{lem}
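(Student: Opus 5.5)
The proof will treat Algorithm~\ref{alg:polyRN} as a path-consistency-plus-greedy-instantiation procedure, with Observation~\ref{obs:simplercc8transiticty} supplying the monotonicity that makes greedy choices safe.

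\emph{Preliminary restriction.} Inside $g=(T<V\setminus V_T)\odot f$, every pair with at least one endpoint outside $V_T$ is unconstrained ($\rcci{8}$). Such variables can therefore be realised afterwards, for instance as regions $DC$ from everything else, using homogeneity of the representation of~\cite{RCCHomogenousness}. So only the subnetwork on $V_T$ matters.

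On $V_T$ each pair $x<_Ty$ is labelled by a subset of $\{\diamond_1,\dots,\diamond_5\}$ (or $\{EQ\}$ for $=_T$). This restricted language is closed under composition in the sense of Observation~\ref{obs:simplercc8transiticty}. By Lemma~\ref{lem:local_consistency}, each call to $RLC_3$ takes polynomial time, and there are $O(|V_T|^2)$ calls, so the running time is immediate.

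\emph{Soundness and the no-instance case.} Any satisfying network survives every $RLC_3$ step, because $RLC_3$ only removes relations that no solution of a $3$-subset can use. For a no-instance we must show some label becomes empty, after which further $RLC_3$ calls empty all labels. This will follow from the main claim below: if no label is ever emptied, the final singleton network is a genuine solution, contradicting that the instance is a no-instance.

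\emph{Main claim (the key step).} The invariant is:
\begin{quote}
after each assignment at line~\ref{line:polyRNAssign}, if the current $3$-consistent network $g'$ has all labels non-empty, then it is satisfiable, and moreover it contains every satisfying network that is pointwise $\ge$ the greedy choices made so far.
\end{quote}
When $g'(x_j,x_i)$ is fixed to its minimum $\diamond_{min}$, take any solution $h$ of the previous network with $h(x_j,x_i)=\diamond_k$, $k\ge min$. I will exhibit a solution using $\diamond_{min}$ instead. The first bullet of Observation~\ref{obs:simplercc8transiticty} says that lowering the index of one edge only enlarges the allowed compositions through that edge. The second bullet says that when a composed set shrinks, it loses only elements above what remains, so minima are preserved.

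Using these, I will show that replacing $h(x_j,x_i)$ by $\diamond_{min}$ and re-minimising the already-processed edges keeps every triangle consistent. Since the relevant fragment is path-consistency decidable, triangle consistency of a complete singleton network gives a real model; this fact is to be cited from the tractability results for \rcci{8} with order-restricted labels (cf.~\cite{dylla}).

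The processing order matters here. Edges are handled with $x_i$ increasing and, for each $x_i$, $x_j$ decreasing. Hence when $(x_j,x_i)$ is fixed, all edges among $\{x_1,\dots,x_{i-1}\}$ and all $(x_{j'},x_i)$ with $j'>j$ are already singletons. So only triangles $(x_j,x_{j'},x_i)$ need checking, and there $3$-consistency plus the two bullets suffice.

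\emph{Minimality.} Induction over the processing order shows that every edge ends at the least index achievable by any solution, given earlier edges. This yields the second bullet of the lemma: no solution $g'\neq g$ has a strictly smaller index on any pair.

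\emph{Main obstacle.} The hard part is justifying that $3$-consistency plus Observation~\ref{obs:simplercc8transiticty} implies global satisfiability, namely that the greedy minimum never leads to a dead end later. I would first try to prove it by exchange: given any solution, lowering labels in processing order never violates a triangle, by the two monotonicity bullets.
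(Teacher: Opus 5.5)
Your overall route, greedy minimum choices justified by the two bullets of Observation~\ref{obs:simplercc8transiticty} and the loop order, is the one the paper sketches. However, the step you flag as the main obstacle is exactly where your argument fails, and the exchange you propose does not work as stated. Lowering the single label $h(x_j,x_i)$ to $\diamond_{min}$ is automatically harmless only in two kinds of triangle: those where $(x_j,x_i)$ is the first edge (bullet~1), and those where both other edges are already singletons ($3$-consistency). It can break a triangle $(x_m,x_j,x_i)$ with $m<j$, because the long edge $(x_m,x_i)$ is processed \emph{later} by the inner loop. For example, let $h$ put $TPP$ on all three edges, let $g'(x_m,x_i)\supseteq\{DC,TPP\}$, and let $\diamond_{min}=DC$; the new triangle then needs $h(x_m,x_i)\in TPP\circ DC=\{DC\}$. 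Bullet~2 only tells you the offending label lies above the new allowed set. So $(x_m,x_i)$ must be lowered too, which can in turn break triangles where that edge is the middle or long edge. Nothing in your sketch controls this cascade, so the claim that ``only triangles $(x_j,x_{j'},x_i)$ need checking'' is false. For the same reason, your minimality induction (``least index achievable given earlier edges'') gives only \emph{lexicographic} minimality. The lemma's second bullet claims \emph{pointwise} minimality over all solutions, and the two agree only if the solution set is closed under pointwise minimum.

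That closure is the missing idea, and it is exactly what Observation~\ref{obs:simplercc8transiticty} encodes. Write $S(a,b)$ for the indices allowed between $x$ and $z$ given $x\diamond_a y$ and $y\diamond_b z$. Then $W=\{(a,b,c)\mid c\in S(a,b)\}$ is closed under coordinatewise minimum. Take $(a_1,b_1,c_1),(a_2,b_2,c_2)\in W$ with $c_1\le c_2$, and put $a^*=\min(a_1,a_2)$. Bullet~1 gives $c_1\in S(a^*,b_1)$, which settles the case $b_1\le b_2$. If $b_2<b_1$ and $c_1\notin S(a^*,b_2)$, bullet~2 gives $c_1>\max S(a^*,b_2)\ge c_2$, since $c_2\in S(a_2,b_2)\subseteq S(a^*,b_2)$; this is a contradiction. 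Ties under $=_T$ only add equality constraints, which are min-closed as well. Now $\mathit{RLC}_3$ is generalized arc consistency for these ternary triangle constraints. For each edge of a triangle, take a support of its minimum label; the coordinatewise minimum of the three supports is the all-minimum triple. Hence the network $m$ choosing the minimum of every label is triangle-consistent, and therefore a solution (your appeal to path consistency for atomic \rcci{8} networks is fine). Every solution survives $\mathit{RLC}_3$ and so dominates $m$ pointwise. Also $m$ survives each later $\mathit{RLC}_3$ call, so the greedy loop outputs exactly $m$. This gives the no-instance case and both bullets at once, with no exchange needed.

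Separately, your preliminary restriction misreads Definition~\ref{def:FunderT}. By its convention $x<_Ty$ for $x\in V_T$ and $y\in V\setminus V_T$, so these cross pairs carry $f(x,y)\setminus\{TPP^{-1},NTPP^{-1},EQ\}$; these are precisely the labels the inconsistency paths of Definition~\ref{def:incPaths} use. Only pairs inside $V\setminus V_T$ are $\rcci{8}$. So the part on $V_T$ does not decide the instance: $x_1\,TPP\,x_2$, $x_2\,NTPP\,y$, $x_1\,DC\,y$ is a no-instance whose $V_T$-part is consistent. Consequently you cannot simply place the outside variables $DC$ from everything. The min-closure argument handles this anyway, since triangles $(x_a,x_b,y)$ with $y\notin V_T$ are forward triangles. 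The outside--outside pairs can be filled with $PO$, which lies in $R^{-1}\circ R'$ for all $R,R'\in\{DC,EC,PO,TPP,NTPP\}$ and in $PO\circ PO$.
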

\begin{proof}
$\mathit{RLC_k}$ is correct by definition as it only removes relations that cannot be used to satisfy the original instance.
Additionally, if the relational network ever becomes inconsistent, every relation is $\emptyset$.
For the remainder of the algorithm it comes down to three things: assigning $g(x_i,x_j)$, $g(x_j,x_k)$ and $g(x_i,x_k)$ for $i<j<k$.
By Line~\ref{line:polyRNAssign} and the ordering of the looping, we ensure $g(x_i,x_j)$ is always assigned first, and assigned according to the first of $DC$, $EC$, $PO$, $TPP$, and $NTPP$, as desired by our stated properties.
By Observation~\ref{obs:simplercc8transiticty}, we can see that this ensures maximum allowed relations for $g(x_i,x_k)$, as well as allowing $g(x_j,x_k)$ to be as early relations from $DC$, $EC$, $PO$, $TPP$, and $NTPP$ as possible. 
Hence, assigning relations in this way fulfills the properties stated in the lemma.
\end{proof}

Now we are finally ready to introduce inconsistency paths, i.e., our representation of what is not yet solved. 

\begin{definition}\label{def:incPaths}
Given a $\csp(\rcci{8})$ network $f$ over $V$ and a partial order $P$ over $V$ such that $x_1<_{P}\dots<_{P}x_i$, let $R_t(x_1,\ldots,x_i)$ be the set of relations allowed between $x_1$ and $x_i$ given transitivity when going from $x_{j}$ to $x_{j-1}$ for all $j\in[2,\dots,i]$ and the assumption that $P$ is applied to $f$, i.e. that the network is now $P\circ f$.
    An \emph{inconsistency path} (IP) for a relational network $g$ satisfying $(T<V\setminus V_T)\odot f$ for some relational network $f$ and total order $T=(V_T,\leq_T)$, is a pair $(h,t)$ such that there is a sequence $x_1<_T\dots<_Tx_k< t$, $\{x_1,\dots x_k\}\subseteq V_T$, and a sequence $y_1,\ldots,y_i$, $t=y_1$, $h=y_i$, $\{y_1,\dots y_i\}\subseteq V\setminus V_T$ such that 
    $g(x_1,h)\cap R_{t}(x_1,\dots x_k,t,\dots,h) = \emptyset$.
\end{definition}

Let $\mathcal{I}_f(g)$ be the set of all IPs in $g$. Finding all IPs, and hence everything that is yet to be solved, turns out to be easy.

\begin{lem}\label{lem:I_f}
    $\mathcal{I}_f(g)$ can be calculated in $\bigoh(poly(V))$ time.
\end{lem}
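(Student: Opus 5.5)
The plan is to reduce the search for inconsistency paths to a reachability computation over a polynomial-size auxiliary graph, in the style of a dynamic program. The definition quantifies over arbitrarily long chains $x_1<_T\dots<_Tx_k$ in $V_T$ and walks $t=y_1,\ldots,y_i=h$ in $V\setminus V_T$. We cannot enumerate these sequences. What matters, however, is only the set of relations $R_t$ accumulated along the path, and it takes values in the finite set of subsets of the eight basic relations.

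First, I will exploit the fact that composition along a path is computed step by step. By the definition of $R_t$, the set allowed between $x_1$ and the current endpoint after extending the path by one variable is determined by three things: the previous set, the label $(P\circ f)$ on the new edge, and the fixed \rcci{8} composition table. For each start vertex $x_1\in V_T$ I therefore define the states as pairs $(v,D)$, where $v\in V$ and $D\subseteq\rcci{8}$. There are at most $|V|\cdot 2^8$ of them. Transitions come in two phases. In the first phase we move from $x_j$ to some $x_{j'}$ with $x_j<_T x_{j'}$. At some point we move from $x_k$ to a $t\in V\setminus V_T$, and this step is permitted since $x<t$ for all $x\in V_T$. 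In the second phase we move within $V\setminus V_T$, along arbitrary edges $y_{\ell}\to y_{\ell+1}$. Each transition updates $D$ to the composition of $D$ with the edge label, using the label of $(P\circ f)$ (equivalently, the constraint of $g$ restricted by the assumed order) for the pair in question. I record the phase as a single bit, and I also record the first vertex of phase two, $t$, so that the output pair is $(h,t)$. This adds a factor of $|V|$ to the state space and keeps it polynomial.

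Second, I will run breadth-first search from each initial state $(x_1,\{\text{all relations of } f(x_1,\cdot)\})$, or more precisely from the singleton path start. The search visits at most $|V|^2\cdot 2^9$ states per start vertex. Each transition is evaluated in constant time, using the constant-size composition table and the intersection from Definition~\ref{def:FunderP}. For every reachable phase-two state $(h,D,t)$, I check whether $g(x_1,h)\cap D=\emptyset$ and, if so, add $(h,t)$ to $\mathcal{I}_f(g)$. Summing over all $x_1$ gives $\bigoh(poly(|V|))$ time. Correctness in one direction is immediate, since every walk in the auxiliary graph corresponds to a valid pair of sequences. In the other direction, every IP witness yields a walk whose final state carries exactly $R_t(x_1,\dots,h)$.

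The main obstacle is to confirm that $R_t$ really is computed by this left-to-right, one-step composition, so that it depends only on the previous set and the new edge. The definition phrases $R_t$ as transitivity ``going from $x_j$ to $x_{j-1}$'', which could be read as a global path-consistency quantity. Under that reading the state would have to carry more than $D$. My first step is to adopt the iterated-composition reading, which is the natural one given Observation~\ref{obs:simplercc8transiticty}, and to note that composition of relation sets is associative and monotone. A second issue is whether the $y$-sequence is required to consist of distinct vertices. If it is, I will argue that repeating a vertex can only shrink $D$ monotonically in a way that is also achievable by a simple path. Failing that, I will simply take the walk-based definition, since the set of reachable $(h,t)$ pairs is what the later comparison of total orders uses.
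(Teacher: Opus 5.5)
Your proposal is correct and follows the paper's route: the paper only notes that there are $|V\setminus V_T|^2$ candidate pairs $(h,t)$, each checkable by breadth-first search. Your product graph over states $(v,D,\text{phase},t)$ with $D\subseteq\rcci{8}$ is exactly what makes that search sound, because the test $g(x_1,h)\cap R_t(x_1,\dots,h)=\emptyset$ depends on the composed label and not on reachability alone. Keep the walk-based reading, which is the one the paper's BFS presupposes, and drop your remark that revisiting a vertex can be matched by a simple path: shortcutting a cycle is only guaranteed harmless when the cycle's composed label contains $EQ$.
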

\begin{proof}
    There are $|V\setminus V_T|^2$ possible IPs and checking them can be done by e.g. breadth-first search.
\end{proof}

We also need to compare different relational networks.

\begin{definition}
    We say $g_1\prec_f g_2$ if $\mathcal{I}_f(g_1) \subset \mathcal{I}_f(g_2)$, $g_1 \equiv_f g_2$ if $\mathcal{I}_f(g_1) = \mathcal{I}_f(g_2)$, and $g_1 \sqcap_f g_2$ if neither $g_1\prec_f g_2$, $g_2 \prec_f g_1$ nor $g_1 \equiv_f g_2$, i.e.\ $g_1$ and $g_2$ are incomparable.
\end{definition}

    Note that $g_1$ and $g_2$ are not necessarily over the same total orders $T_1=(V_T,\leq_{T_1})$ and $T_2=(V_T,\leq_{T_2})$. Now, by Lemma~\ref{lem:todotfpoly} we know that we can get the best relational network for each $T$.

\begin{lem}\label{lem:RNoptimalFormT}
    The relational network $g$ over $T$ returned by Algorithm~\ref{alg:polyRN} is $\prec_f$-minimal, i.e. there is no other relational network $g'$ over $T$ that also satisfies $(T<V\setminus V_T)\odot f$ and such that $g'\prec_f g$.
\end{lem}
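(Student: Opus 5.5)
The plan is to argue by contradiction and reduce the statement to the pointwise minimality guaranteed by Lemma~\ref{lem:todotfpoly}, combined with a monotonicity property of inconsistency paths. Let $g$ be the network returned by Algorithm~\ref{alg:polyRN} for $T$. Suppose some $g'$ over $T$ satisfies $(T<V\setminus V_T)\odot f$ and $g'\prec_f g$, that is, $\mathcal{I}_f(g')\subset\mathcal{I}_f(g)$. Then some pair $(h,t)$ is an IP of $g$ but not of $g'$.

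The first step is to show that IPs are monotone in the ordering $\diamond_1,\dots,\diamond_5$. Suppose $g$ is pointwise no larger than $g'$ on the relevant pairs, meaning that whenever $g(x,y)=\{\diamond_i\}$ and $g'(x,y)=\{\diamond_j\}$ we have $i\le j$. I will then show that every IP of $g$ is also an IP of $g'$. The only part of Definition~\ref{def:incPaths} that depends on the network over $V_T$ is $g(x_1,h)\cap R_t(x_1,\dots,x_k,t,\dots,h)=\emptyset$. Here $R_t$ is a composition of transitivity tables along the path, and the $V_T$-part of this path uses the relations fixed by the network.

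\begin{itemize}
\item By the first bullet of Observation~\ref{obs:simplercc8transiticty}, raising the index of a relation on the $V_T$-segment can only shrink the set of relations allowed by composition. Applying this inductively along $x_1,\dots,x_k$ shows that $R_t$ computed from $g'$ is contained in $R_t$ computed from $g$.
\item The endpoint term $g(x_1,h)$ lies in $V_T\times(V\setminus V_T)$. It is therefore determined by $f$ together with the convention $x<_T y$, so it is identical for $g$ and $g'$.
\end{itemize}

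Together these show that emptiness of the intersection transfers from $g$ to $g'$, so $\mathcal{I}_f(g)\subseteq\mathcal{I}_f(g')$.

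The second step combines this with Lemma~\ref{lem:todotfpoly}. That lemma says no satisfying $g'\neq g$ has a pair $x,y$ with $g'(x,y)$ earlier than $g(x,y)$. Since $g'$ is a complete instantiation on $V_T$, it is therefore pointwise at least $g$, which is exactly the hypothesis of the first step. Hence $\mathcal{I}_f(g)\subseteq\mathcal{I}_f(g')$, which contradicts $\mathcal{I}_f(g')\subset\mathcal{I}_f(g)$. So $g$ is $\prec_f$-minimal.

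I expect the main obstacle to be the monotonicity step. The first bullet of Observation~\ref{obs:simplercc8transiticty} only controls the left argument of a composition. Composing along a chain also needs control of the right argument, which is where the second bullet of the observation comes in. That bullet does not give plain subset inclusion: it only says that new relations appearing for a larger $j$ have larger index than every relation already present.

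My first attempt will be to strengthen the induction hypothesis. Instead of tracking subsets, I will track ``downward-closed-or-shifted-upward'' sets, and show this form is preserved under composition. Emptiness against the fixed set $g(x_1,h)$ should then still transfer. If that fails, I will restrict to the composition direction actually used in Definition~\ref{def:incPaths}, going from $x_j$ to $x_{j-1}$, where only the first bullet should be needed.
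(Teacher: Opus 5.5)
Your argument is the paper's proof rearranged into contrapositive form: the paper assumes $g'\prec_f g$, uses Observation~\ref{obs:simplercc8transiticty} to conclude that $g'$ must pick an earlier relation than $g$ on some pair in $V_T$, and then contradicts Lemma~\ref{lem:todotfpoly}, so it rests on exactly your two ingredients. Your fallback is the right way to settle the obstacle you raise: Definition~\ref{def:incPaths} composes from $x_j$ to $x_{j-1}$, so every $V_T$-relation enters as a left argument while the accumulated right-hand set can only shrink (and composition is monotone under inclusion), and the first bullet therefore suffices. One caveat is shared with the paper, which tacitly makes the same assumption: you treat cross entries such as $g(x_1,h)$ as unaffected by the $V_T$-choices, which holds if they are read off $(T<V\setminus V_T)\odot f$ but not literally for the $RLC_3$-pruned output of Algorithm~\ref{alg:polyRN}.
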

\begin{proof} 
First, note again that Algorithm~\ref{alg:polyRN} returns the relational network with every relation chosen as early as possible from the sequence $DC$, $EC$, $PO$, $TPP$, and $NTPP$ as by Lemma~\ref{lem:todotfpoly}. Let $g$ be the network returned by Algorithm~\ref{alg:polyRN} and assume that there exists $g' \prec_f g$ which also satisfies $(T < V \setminus V_t) \odot f$.
There is then a pair $x,y\in V_T$ such that $g(x,y)$ comes after $g'(x,y)$ in the sequence.
Otherwise there can be no $(h,t)\in \mathcal{I}_f(g)$ while $(h,t)\not\in \mathcal{I}_f(g')$, as noted by Observation~\ref{obs:simplercc8transiticty}.
However, as Algorithm~\ref{alg:polyRN} guarantees the first possible choice from the sequence, such a $g'$ cannot exist, giving us a contradiction.
\end{proof}

With Lemma~\ref{lem:RNoptimalFormT} we can now use $T$ in place of any relational network $g$ satisfying $(T<V\setminus V_T) \odot f$, i.e., we can assume that $\mathcal{I}_f(T)=\mathcal{I}_f(g')$, where $g'$ is the output of Algorithm~\ref{alg:polyRN}.

\begin{definition}\label{def:sqcap_f}
    Given a set of total orders $\mathbf{T}$ over a set of variables $V_T\subseteq V$ and a multi relational network $f$ over $V$, let $\sqcap_f^{min}(\mathbf{T})$ be a maximal subset of $\mathbf{T}$ such that (1) for no $T,T'\in \sqcap_f^{min}(\mathbf{T})$ we have $T \equiv_f T'$ and (2) for every $T \in \sqcap_f^{min}(\mathbf{T})$ there is no $T'\in \mathbf{T}$ such that $T' \prec_f T$.
    Additionally, if $(T<V\setminus V_T)\odot f$ is a no-instance, then $T\not \in \sqcap_f^{min}(\mathbf{T})$.
\end{definition}
Note here the last property, as this will allow us to keep only total orders that are actually of interest to us. We can achieve, and keep, this property by running Algorithm~\ref{alg:polyRN} on every total order. 
With the minimal sets defined, we also want to know what the cost is to compute them. 

\begin{lem}\label{lem:PCleanComplexity}
Given a set $\mathbf{T}$ of total orders over a set of variables $V_T\subseteq V$ and a multi relational network $f$ over $V$, we can compute $\sqcap_f^{min}(\mathbf{T})$ in $\bigoh^*(|\mathbf{T}|^2)$ time.
\end{lem}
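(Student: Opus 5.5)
The plan is to reduce the computation of $\sqcap_f^{min}(\mathbf{T})$ to three polynomial-time primitives per total order, followed by pairwise comparisons of the resulting IP sets.

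\emph{Preprocessing.} For every $T \in \mathbf{T}$ we run Algorithm~\ref{alg:polyRN} on $(T<V\setminus V_T)\odot f$, which takes polynomial time.
\begin{itemize}
\item If it returns the empty network, then $(T<V\setminus V_T)\odot f$ is a no-instance by Lemma~\ref{lem:todotfpoly}. We discard $T$, as the last clause of Definition~\ref{def:sqcap_f} requires.
\item Otherwise we obtain the network $g_T$. By Lemma~\ref{lem:RNoptimalFormT}, $g_T$ is $\prec_f$-minimal among networks over $T$, so we may identify $\mathcal{I}_f(T)$ with $\mathcal{I}_f(g_T)$. We compute this set with Lemma~\ref{lem:I_f} and store it as a Boolean $|V\setminus V_T|\times|V\setminus V_T|$ matrix of pairs $(h,t)$.
\end{itemize}
This phase costs $|\mathbf{T}|\cdot poly(|V|)$ time.

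\emph{Pairwise comparison.} For every ordered pair $T,T'$ of surviving orders we decide whether $\mathcal{I}_f(T)\subset\mathcal{I}_f(T')$, whether $\mathcal{I}_f(T)=\mathcal{I}_f(T')$, or neither. Each test is an entrywise comparison of two matrices of polynomial size, so all tests together take $\bigoh(|\mathbf{T}|^2\cdot poly(|V|)) = \bigoh^*(|\mathbf{T}|^2)$ time.

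\emph{Building the output.} We mark $T$ as dominated if some $T'$ satisfies $T'\prec_f T$. Among the undominated orders we keep one representative of each $\equiv_f$-class: scan them in a fixed order and keep $T$ only if no earlier kept order is $\equiv_f$-equivalent to it. These checks reuse the comparisons already computed.

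\emph{Correctness.} We must check that the output is a maximal subset with properties (1) and (2).
\begin{itemize}
\item Property (2) holds because only undominated orders are kept.
\item Property (1) holds because at most one order per $\equiv_f$-class is kept.
\item For maximality, suppose we add any further undominated order $T$. Its class already has a kept representative, so property (1) would fail. Adding a dominated order would violate property (2).
\end{itemize}
Since $\prec_f$ is strict inclusion of IP sets, it is a strict partial order and comparisons are well defined. Domination is therefore tested directly against all of $\mathbf{T}$, so no transitivity argument is needed.

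\emph{Main obstacle.} The only subtle point is that the comparison has to be made between the orders themselves and not between arbitrary networks. This relies on Lemma~\ref{lem:RNoptimalFormT} to justify representing each $T$ by the single network $g_T$. I would state this identification explicitly before running the comparisons. Everything else is routine polynomial bookkeeping.
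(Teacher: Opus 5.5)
Your proposal is correct and takes the same route as the paper's proof. Both compare all pairs of orders by brute force via their IP sets (Lemma~\ref{lem:I_f}), and both use Algorithm~\ref{alg:polyRN} to discard every $T$ for which $(T<V\setminus V_T)\odot f$ is a no-instance. You additionally make explicit what the paper leaves implicit: you compute each $\mathcal{I}_f(T)$ once before the $|\mathbf{T}|^2$ comparisons, and you keep one representative per $\equiv_f$-class among the undominated orders to get a maximal subset.
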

\begin{proof}
We can brute-force compare all orders in $\mathbf{T}$.
Comparing two orders can be done in polynomial time by Lemma~\ref{lem:I_f}, and the last property of Definition~\ref{def:sqcap_f} is done by Algorithm~\ref{alg:polyRN}, which is polynomial in $|V|$.
\end{proof}

While the factor $|\mathbf{T}|^2$ does not seem too costly, we will later see that this is actually the main bottleneck of the final algorithm, and any improvement here would have a significant impact.
Before turning to the the main result of this section we introduce two simplifying notations.

\begin{definition}
    For any set $S$, let $S^=$ be the total order $(S,\{x=y\,|\,x,y\in S\})$, i.e. the total order where all variables are equivalent.
    For a total order $T=(V_T,\leq_T)$ (assuming $V_T \cap S = \emptyset$) we then by $T<S^=$ mean the total order constructed by first taking $T$, and then adding the set $S$ as a new partition that comes after all the ones in $T$. 
\end{definition}

We are now ready to define the main recurrence relation $R_f(S)$ (which constitutes our algorithm) as
\begin{equation*}
    \begin{cases} 
        \{(\emptyset,\emptyset)\} \text{ if }S=\emptyset\text{, and otherwise} \\ 
        \sqcap_f^{min}(\bigcup_{V_T\subseteq S} \{(T<(S\setminus V_T)^=)\,|\,T\in R_f(V_T)\})
    \end{cases}
\end{equation*}

First we establish correctness.

\begin{lem}\label{lem:PoTCorrect}
    If, and only if, $R_f(V)\not = \emptyset$ then $f$, and hence $I=(V,C)$, is a yes-instance.
\end{lem}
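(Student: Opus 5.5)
We prove the two directions separately, following the paper's definitions.

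\emph{Soundness ($\Leftarrow$).} Assume $R_f(V)\neq\emptyset$ and pick $T\in R_f(V)$. By construction of the recurrence, $T$ is a total order with $V_T=V$; every element of $R_f(\cdot)$ has the form $T'<(S\setminus V_{T'})^=$. By the last clause of Definition~\ref{def:sqcap_f}, $\sqcap_f^{min}$ only keeps orders for which $(T<V\setminus V_T)\odot f$ is a yes-instance. Since $V_T=V$, we have $T\odot f=T\circ f$. Every relation set $(T\circ f)(x,y)$ is a subset of $f(x,y)$, so a solution of $T\circ f$ is a solution of $f$. Hence $I$ is a yes-instance.

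\emph{Completeness ($\Rightarrow$).} Assume $I$ is a yes-instance. By Lemma~\ref{lem:IyesTyes} there is a total order $T^*$, with blocks $B_1<\dots<B_r$, such that $T^*\circ f$ is a yes-instance. Let $P_j=B_1\cup\dots\cup B_j$ and let $T^*_j$ be $T^*$ restricted to $P_j$. We prove by induction on $j$ the following invariant.
\begin{quote}
There is $T_j\in R_f(P_j)$ such that $T_j<B_{j+1}^=<\dots<B_r^=$ applied to $f$ via $\circ$ is a yes-instance.
\end{quote}
Note that it is convenient to let the continuation be the fixed tail $B_{j+1}<\dots<B_r$ of $T^*$.

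The base case $j=0$ is trivial. For the step, take $T_j$ from the invariant and form $T'=T_j<B_{j+1}^=$. This order belongs to the union over $V_T\subseteq P_{j+1}$ in the recurrence, with $V_T=P_j$. Moreover $(T'<V\setminus P_{j+1})\odot f$ is a yes-instance, because it only relaxes the yes-instance from the invariant: the entries on the unordered part become $\rcci{8}$. So $T'$ is a candidate. If $T'$ itself survives $\sqcap_f^{min}$, we are done. Otherwise there is a surviving $T''\in R_f(P_{j+1})$ with $T''\prec_f T'$ or $T''\equiv_f T'$, and we must show that $T''$ still extends by the tail to a yes-instance. For $j+1=r$ the surviving $T''$ is a full order with $T''\odot f$ yes, so the induction closes, and applying it at $j=r$ gives $R_f(V)\neq\emptyset$.

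\emph{The main obstacle} is this exchange step. It says that replacing the prefix order $T'$ by $T''$ with $\mathcal{I}_f(T'')\subseteq\mathcal{I}_f(T')$ preserves extendability by the same suffix order. The plan is as follows.
\begin{enumerate}
\item By Lemma~\ref{lem:RNoptimalFormT}, take $g''$ to be the canonical network for $T''$ produced by Algorithm~\ref{alg:polyRN}.
\item Take a full solution $h$ of $T'<\text{tail}$ and restrict it to the suffix $V\setminus P_{j+1}$.
\item Glue $g''$ on the prefix to $h$ on the suffix, and fill the cross pairs $(x,y)$ with $x\in P_{j+1}$, $y\notin P_{j+1}$ by choosing the $\diamond$-minimal relation consistent with transitivity. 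This mirrors Lemma~\ref{lem:todotfpoly} and Observation~\ref{obs:simplercc8transiticty}.
\item Argue that the glued network can fail only if some path-consistency failure runs from the suffix back into the prefix, i.e.\ only if some $(h,t)\in\mathcal{I}_f(g'')$ is realised by the suffix assignment.
\item Such an $(h,t)$ also lies in $\mathcal{I}_f(T')$, so the same failure would occur for $T'$, contradicting that $T'<\text{tail}$ is a yes-instance.
\item Conclude the combined network is path consistent, and invoke the fact that path consistency decides the basic-relation $\rcci{8}$ CSP (which is in P) to obtain an actual solution.
\end{enumerate}
The delicate point is steps 3–5: IPs as defined only record transitivity chains that leave the prefix at a single entry, so we must show that every inconsistency of the glued network is witnessed by one such chain. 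We would attempt this by induction on the length of the path in a minimal path-consistency refutation, using the two monotonicity properties of Observation~\ref{obs:simplercc8transiticty}.
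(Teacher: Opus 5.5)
Your soundness direction is the paper's: an element of $R_f(V)$ is a full order $T$ that passed the no-instance filter, and $T\odot f=T\circ f$ refines $f$. Your completeness direction also follows the paper's route. Correctness can only be lost when pruning to $\sqcap_f^{min}$, and this is repaired by fixing the suffix part $h$ of a solution for the discarded order and transferring it to the kept order via $\mathcal{I}_f(T'')\subseteq\mathcal{I}_f(T')$. Your block-by-block induction along $T^*$ organises this more cleanly than the paper's text. It also covers the $\equiv_f$ case, which the paper's argument (stated only for $\prec_f$) does not mention.

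The gap is the exchange step itself. You leave it as a plan, and the paper only asserts it in one sentence (``Since $\mathcal{I}_f(T)\subset\mathcal{I}_f(T')$ then $((T<V\setminus V_T)\circ f)\cap h$ must also be a yes-instance''). As formulated, your steps 4 and 5 do not follow from Definition~\ref{def:incPaths}. The relation $\prec_f$ compares sets of endpoint pairs computed from $f$, whereas your argument needs statements relative to the fixed suffix assignment $h$.

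\emph{Step 4.} You need every failure of the glued network to be witnessed by an element of $\mathcal{I}_f(g'')$. But $R_t$ is computed from the coarser multi-relations of $P\circ f$ along the chain. So a conflict that arises only for the particular basic relations chosen by $h$ need not be an IP at all. The same holds, as you note, for a conflict whose witness crosses the prefix/suffix boundary several times, or that arises only when completing the cross pairs jointly.

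\emph{Step 5.} You need a pair $(u,t)\in\mathcal{I}_f(T')$ to produce a failure of $T'<\text{tail}$ under $h$. But membership is existential over the start $x_1$, the chain, and a \emph{hypothetical} ordering $t<\dots<u$ of the suffix path. So the pair may be witnessed for $T'$ by a different chain, or by an ordering that the tail avoids. IPs are only potential conflicts, so an extendable $T'$ can have $\mathcal{I}_f(T')\neq\emptyset$, and membership of a pair cannot by itself contradict extendability. Moreover, $\mathcal{I}_f(T')$ refers to the canonical network of $T'$, while the solution of $T'<\text{tail}$ may use a different prefix network. Lemma~\ref{lem:RNoptimalFormT} only gives $\prec_f$-minimality of the canonical network, not that its IP set is contained in that of every other network over $T'$.

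What is missing is a characterisation lemma of the form ``for a fixed suffix solution $h$, $((T<V\setminus V_T)\circ f)\cap h$ is a yes-instance iff no IP of $T$ is realised by $h$''. The notion of IP would have to be fine enough that containment of IP sets transfers realisation. Alternatively, one would need a finer comparison relation than $\prec_f$, which would then have to be re-checked against the counting in Lemma~\ref{lem:PupperBound}. The paper does not supply this either, so you have not missed an argument that is in the text. But as it stands, your completeness direction rests on a transfer step that is unproved, and step 5 as written is unjustified.
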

\begin{proof}
    By Lemma~\ref{lem:IyesTyes} we know that $T$ exists, and if $R_f(V)\not = \emptyset$, then, since $\sqcap_f^{min}$ calls Algorithm~\ref{alg:polyRN}, we must have a yes-instance by Lemma~\ref{lem:todotfpoly}.
    For the other direction the only thing that can introduce non-correctness is going from $\mathbf{T}$ to $\sqcap_f^{min}(\mathbf{T})$.
    Assume $(T'<V\setminus V_T)\circ f$ is a yes-instance while $(T<V\setminus V_T)\circ f$ is a no-instance, with $T \prec_f T'$, $T,T'\in\mathbf{T}$, and $T'\not \in\sqcap_f^{min}(\mathbf{T})$.
    Hence, there must be a function $h \colon (V\setminus V_T)^2 \xrightarrow{} \{DC,EC,PO,TPP,NTPP,TPP^{-1},NTPP^{-1},EQ\}$ such that $((T'<V\setminus V_T)\circ f)\cap h$ is a yes-instance.
    Since $\mathcal{I}_f(T) \subset \mathcal{I}_f(T')$ then $((T<V\setminus V_T)\circ f)\cap h$ must also be a yes-instance. 
    Hence, $(T'<V\setminus V_T)\circ f \Rightarrow (T<V\setminus V_T)\circ f$.
\end{proof}

What remains to prove is the bound on the size of $\sqcap_f^{min}(\mathbf{T})$, since this is extremely relevant for the complexity analysis of the algorithm.
\begin{lem}\label{lem:PupperBound}
Given a set $\mathbf{T}$ of total orders over a set of variables $V_T\subseteq V$ and an multi relation network $f$ over $V$ with $n=|V|$ then
$|\sqcap_f^{min}(\mathbf{T})|\in \bigoh^*((cn/\log n)^{n/2})$ for some constant $c$.
\end{lem}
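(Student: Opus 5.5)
By the definition of $\sqcap_f^{min}$, no two of its elements are $\equiv_f$-equivalent. Each kept total order $T$ is therefore identified by a distinct set $\mathcal{I}_f(T)$ of inconsistency paths. The plan is to bound the number of distinct IP-sets that can arise, instead of counting total orders. Since $\sqcap_f^{min}(\mathbf{T})$ is also an antichain under $\prec_f$ (nothing in it is strictly dominated), an antichain bound could help. The main route, though, is to show that $\mathcal{I}_f(T)$ is determined by a small amount of information. That information should be describable in roughly $\frac{n}{2}\log(n/\log n)$ bits.

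\textbf{Step 1: compressed description of $\mathcal{I}_f(T)$.} Let $k=|V_T|$ and $m=|V\setminus V_T|$. By Definition~\ref{def:incPaths}, an IP $(h,t)$ depends on $T$ only through the relation $g(x_1,h)$ and the composed transitivity set along a chain $x_1<_T\dots<_T x_k<t$. By Lemma~\ref{lem:RNoptimalFormT}, $g$ is the canonical output of Algorithm~\ref{alg:polyRN}. So each $h\in V\setminus V_T$ has a five-valued ``profile'' towards $V_T$. Using Observation~\ref{obs:simplercc8transiticty}, the transitivity sets are monotone in the index $i$ of $\diamond_i$. Hence, for each $t$, what matters is the position in the order $T$ at which the chain enters $t$, together with the least $\diamond$-label reachable there. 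I would therefore map $T$ to a function $\phi_T\colon V\setminus V_T\to \{\text{positions of } T\}\times[5]$. The claim to establish is that $\mathcal{I}_f(T)$ is a function of $\phi_T$. This gives at most $(5(k+1))^{m}$ classes. Combined with the trivial bound of at most the number of ordered partitions of $V_T$ (roughly $(k/(e\ln 2))^k$), this yields $|\sqcap_f^{min}(\mathbf{T})|\le \min\{\text{ordered partitions of }V_T,\ (5(k+1))^{n-k}\}$.

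\textbf{Step 2: balancing.} The first term is $(ck/\log k)^{k}$-like when one is careful about ordered partitions versus total orders. Recall that orders in $R_f$ are built from blocks $(S\setminus V_T)^=$, so the count is of ordered partitions. By the standard estimate, that count is $(k/\log k)^{k(1+o(1))}$. The second term is $n^{O(n-k)}$. For $k\le n/2$, the first term is at most $(cn/\log n)^{n/2}$. For $k>n/2$, I would refine the second bound by noting that the positions in $\phi_T$ only need to distinguish blocks of $T$ relevant to the $m$ outside variables. This gives a bound of roughly $\binom{k+m}{m}5^m$, or, after a further refinement via an ordered partition of the $m$ outside variables interleaved with $T$, $(cm/\log m)^{m}\cdot 2^{O(n)}$. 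That is $\bigoh^*((cn/\log n)^{n/2})$ when $m<n/2$. Taking the maximum over $k$ then gives the lemma.

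\textbf{Main obstacle.} The hard step is Step 1: proving that the IP-set depends only on such a coarse ``interface'' summary. An IP may enter $t$ from an arbitrary chain through $V_T$, and the interaction of $g(x_1,h)$ with the chain start $x_1$ is not obviously local. I would first try to show, using the monotonicity in Observation~\ref{obs:simplercc8transiticty} and the minimality from Lemma~\ref{lem:todotfpoly}, that the worst (smallest) chain relation reaching $t$ is attained by taking the chain starting at the earliest possible $x_1$. This would reduce the dependence to the position of the last block of $T$ before $t$'s relevant neighbours. If that fails, the fallback is to encode, for each outside variable, its position among the blocks of $T$. The count is then an interleaving of two ordered partitions, and the $n/2$ exponent follows from splitting at the smaller side.
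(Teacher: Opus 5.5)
\textbf{There is a genuine gap, and it appears in Step~2 as well as in Step~1: even if your compression claim were granted, the counting would not give the lemma.} The number of ordered partitions of a $k$-set (the Fubini numbers) is asymptotically $k!/(2(\ln 2)^{k+1})$. So it is at least $k!$ and of order $(k/(e\ln 2))^k$. The estimate $(k/\log k)^{k(1+o(1))}$ you quote belongs to the Bell numbers, i.e.\ \emph{unordered} partitions. Ordered partitions are more numerous than strict total orders, not fewer, so building orders from blocks $(S\setminus V_T)^=$ does not help.

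With the correct count, at $k=n/2$ both terms of your minimum are $(c'n)^{n/2}$ up to polynomial factors. Your argument therefore certifies only $\bigoh^*((c'n)^{n/2})$. This exceeds $(cn/\log n)^{n/2}$ by the superpolynomial factor $(\Theta(\log n))^{n/2}$ for every constant $c$, and that logarithmic factor is exactly what makes the final algorithm $o(n)^n$. The same error affects your $(cm/\log m)^m$ refinement for $k>n/2$. More structurally, any split of the form ``orders of $V_T$'' versus ``an interface of size about $|V_T|^{|V\setminus V_T|}$'' balances at $|V_T|\approx n/2$ and cannot produce the logarithm.

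The paper's argument has a different shape. It views $V_T$ as cut into blocks of size $k$. By a minimal-counterexample argument that removes one variable at a time, it argues that all $k!$ internal orders of a block can be told apart only if the block is fully connected. That consumes about $k(k-1)$ outside resources (IPs are pairs of outside variables), so $g(m,k(k-1))\le k!^{m/k}$ and the total is at most $k!^{(n-k(k-1))/k}$. Optimizing over $k$ (Lambert $W$, optimum near $k\approx\sqrt{n/\log n}$) is where $\log n$ enters. This quadratic cost of distinguishing orders is the idea your proposal is missing.

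\textbf{Step~1 is also not merely unverified but doubtful as formulated.} By Definition~\ref{def:incPaths}, whether $(h,t)$ is an IP depends on $g(x_1,h)$ for every admissible chain start $x_1\in V_T$. It also depends on the composition of all labels along chains inside $V_T$. So it is governed by a $|V_T|\times|V\setminus V_T|$ table together with the internal network, not by a single (position, label) pair per outside variable. Neither Observation~\ref{obs:simplercc8transiticty} nor Lemma~\ref{lem:todotfpoly} collapses this.

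To count $\equiv_f$-classes you would moreover need one map $\Psi$ with $\mathcal{I}_f(T)=\Psi(\phi_T)$ for all $T\in\mathbf{T}$ simultaneously. But ``position in $T$'' names different variables in different orders, so no such uniform map is available from your sketch. As a sanity check, your intermediate bound $\binom{k+m}{m}5^m\le 2^{\bigoh(n)}$ never uses $k>n/2$. If it held, it would make $|\sqcap_f^{min}(\mathbf{T})|$ single exponential for every $k$ and the whole algorithm $2^{\bigoh(n)}$. That is far beyond what the lemma or the paper claims, which indicates the compression is too optimistic.
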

\begin{proof}
Proof by contradiction:
let $g(m,k')$ be the maximum size of $\sqcap_f^{min}(\mathbf{T})$ given total orders of length $m=|V_T|$ and $k'=|V \setminus V_T|$. Hence, we now want to find the maximum for the function $g(m,k')^{(n-k')/m}$, i.e. we are assuming we are partitioning $V$ into subsets of size $m$, and assigning a total order over each of these sets, using at most $k'$ IPs for each.
To see that $k!^{(n-k(k-1))/k}$ is the upper bound, we assume $m>k$ is the smallest $m$ such that $g(m,k(k-1))>k!^{m/k}$.
Hence, $g(m-1,k(k-1))\leq k!^{(m-1)/k}$ by our assumption.
If we now try to remove an arbitrary variable $x$ from each of the total orders in $\sqcap_f^{min}(\mathbf{T})$, we will be left with a new set of total orders we call $\mathbf{T}_{m-1}$.
Then $\sqcap_f^{min}(\mathbf{T_{m-1}})\leq g(m-1,k(k-1))$, otherwise the assumptions are already broken.
Additionally, if for any $T\in \mathbf{T}_{m-1}$ our $x$ was connected to $c$ variables $y_1,\dots,y_c$ such that there was no relation between them, then $\sqcap_f^{min}(\mathbf{T}_{m-1})\leq g(m-1,k(k-1)) - 2^{c-1}$.
Otherwise we could take e.g. the relations used for $x<_Ty_1$ and $x>_Ty_1$, and add these to $y_1<_Ty_2$ and $y_1>_Ty_2$ respectively, and so on.   
Effectively this means that in the step from $\mathbf{T}_{m-1}$ to $\mathbf{T}$, we can never surpass the case of connecting $x$ to $c$ other variables which all are in full relation to each other, giving $(c+1)g(m-1,k(k-1))$.
But as this is true for all $m$ variables and as $m$ is the smallest such value, they must all be fully connected to each other, which requires $m(m-1)$ relations.
But as we only have $k(k-1)$ possible relations, we must have $k=m$, which contradicts the original assumption.
Alternatively $m$ could be partitioned into sets of size $k(k-1)$, which breaks both the assumption that $m$ is the smallest value such that $g(m,k(k-1))>k!^{m/k}$, nor would $g(m,k(k-1))>k!^{m/k}$ hold.
Hence, our maximum is $k!^{(n-k(k-1))/k}$, for which we get the optimum $(cn/\log n)^{n/2}$, e.g. using Lambert's W function (see \cite{LambertW} for an overview), for some constant $c$.
\end{proof}

We combine everything and give the first $o(n)^n$ result for $\csp(\rcci{8}^U)$ (where $c \geq 0$ stems from Lemma~\ref{lem:PupperBound}).

\begin{theorem}
    An arbitrary $\csp(\rcci{8}^U)$ instance $I=(V,C)$ with $n=|V|$ is solvable in $\bigoh^*((cn/\log n)^{n})$ time and $\bigoh^*((cn/\log n)^{n/2})$ space.
\end{theorem}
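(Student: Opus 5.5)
The plan is to evaluate the recurrence $R_f(S)$ bottom-up over all subsets $S\subseteq V$ in order of increasing size, and then combine Lemma~\ref{lem:PoTCorrect} (correctness), Lemma~\ref{lem:PupperBound} (size of each table entry) and Lemma~\ref{lem:PCleanComplexity} (cost of minimisation).

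\textbf{Correctness.} We store $R_f(S)$ for every $S\subseteq V$. By Lemma~\ref{lem:PoTCorrect}, $I$ is a yes-instance if and only if $R_f(V)\neq\emptyset$, so the answer is read off the final entry. One point needs checking. In the recurrence, when $R_f(S)$ is built we regard each $T\in R_f(V_T)$ extended by $(S\setminus V_T)^=$ as a total order over $S$, and we measure inconsistency paths relative to the remaining variables $V\setminus S$. The argument of Lemma~\ref{lem:PoTCorrect} must therefore be applied at every level $S$, not only at $V$: discarding a $\prec_f$-dominated or $\equiv_f$-duplicate order never loses an extendable one. This is an induction on $|S|$, using Lemma~\ref{lem:IyesTyes} to fix a target total order $T^*$. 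Its prefixes have the form $T<(S\setminus V_T)^=$, i.e.\ the blocks of $T^*$ taken one at a time. At each level, either that prefix survives or some order with a subset of its IPs survives, and by Lemma~\ref{lem:PoTCorrect} the latter is at least as extendable.

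\textbf{Space.} Each entry satisfies $|R_f(S)|\in\bigoh^*((cn/\log n)^{n/2})$ by Lemma~\ref{lem:PupperBound}. A naive total over all $2^n$ subsets is that bound times $2^n$. The factor $2^n$ is absorbed by enlarging $c$, since $2^n(cn/\log n)^{n/2}=(4cn/\log n)^{n/2}$. Each order has polynomial size, so the space is $\bigoh^*((cn/\log n)^{n/2})$ after renaming the constant.

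\textbf{Time.} For a fixed $S$, the candidate set $\mathbf{T}_S=\bigcup_{V_T\subseteq S}\{T<(S\setminus V_T)^=\mid T\in R_f(V_T)\}$ has size at most $2^{|S|}\cdot\max_{V_T}|R_f(V_T)|$, which is $\bigoh^*(2^n(cn/\log n)^{n/2})$. Each candidate is built in polynomial time, and Algorithm~\ref{alg:polyRN} together with Lemma~\ref{lem:I_f} costs polynomial time per candidate. By Lemma~\ref{lem:PCleanComplexity}, computing $\sqcap_f^{min}(\mathbf{T}_S)$ costs $\bigoh^*(|\mathbf{T}_S|^2)=\bigoh^*(4^n(cn/\log n)^{n})$. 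Summing over the $2^n$ sets $S$ gives $\bigoh^*(8^n(cn/\log n)^n)$, which is $\bigoh^*((c'n/\log n)^n)$ with $c'=8c$.

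\textbf{Main obstacle.} The difficulty lies entirely in the pairwise comparison of Lemma~\ref{lem:PCleanComplexity}: squaring the table size is what turns the exponent $n/2$ into $n$. The other ingredient to verify is that the single-exponential factors $2^n,4^n,8^n$ are harmless. This holds because $(cn/\log n)^n$ absorbs any $C^n$ into the constant $c$, whereas a polynomial factor in the base would not be absorbed.
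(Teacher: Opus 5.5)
Your proposal is correct and follows essentially the same route as the paper. Both arguments take correctness from Lemma~\ref{lem:PoTCorrect}, the per-entry bound $|R_f(S)|\in\bigoh^*((cn/\log n)^{n/2})$ from Lemma~\ref{lem:PupperBound}, and the quadratic cost of computing $\sqcap_f^{min}$ from Lemma~\ref{lem:PCleanComplexity}, then absorb the single-exponential overheads into the constant $c$. Your level-by-level induction for correctness, and your explicit absorption of the $2^n$, $4^n$ and $8^n$ factors (where the paper just counts $3^n$ DP steps), spell out what the paper's terse proof leaves implicit.
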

\begin{proof}
By Lemma~\ref{lem:PoTCorrect} we have a correct algorithm in $R_f(S)$, by Lemma~\ref{lem:PupperBound} we know that for all $S\subseteq V$ then $|R_f(S)|\in \bigoh^*((cn/\log n)^{n/2})$, using a DP approach to calculate $R_f(S)$ for all $S\subseteq V$ can be done in $3^n$-steps by standard methods, and lastly, calculating $\sqcap_f^{min}$ in every step of the recurrence function/DP is in $\bigoh^*((2^{|S|}|R_f(S)|)^2)$ time by Lemma~\ref{lem:PCleanComplexity}.
Combining all this gives that $I$ is solvable in $\bigoh^*((cn/\log n)^{n})$ time and $\bigoh^*((cn/\log n)^{n/2})$ space.
\end{proof}

While the DP part of the algorithm has been reasonably hidden until this last theorem (or until the recurrence relation for the astute reader) the concept has been very relevant for why we want to minimize $\sqcap_f^{min}(\mathbf{T})$ and why focusing on IPs are relevant in the first place.

\section{Concluding Remarks} \label{sec:summary}

Our work opens up mathematically and algorithmically motivated questions. First, non-redundancy of finite-domain CSPs has recently shown to coincide with {\em $\epsilon$-sparsifiability}~\cite{brakensiek2024}. Is this possible for qualitative reasoning problems (often {\em $\omega$-categorical}) and other types of infinite-domain CSPs? This would allow one to  sparsify instances via coding theory merely by knowing $\nrd_{\Gamma}$.
Second, are there any other (NP-hard) fragments of qualitative reasoning problems solvable in single-exponential time? Here, one can get reasonable candidates by selecting a maximal tractable class and adding an arbitrary relation which makes the problem hard. For example, can we find $2^{\bigoh(n)}$ fragments of e.g. the {\em cardinal direction calculus} or the {\em star calculus} in this way? In the other direction, one may conjecture that NP-hard qualitative reasoning problems are solvable in $o(n)^n$ time but not substantially faster. Can this be formally established under e.g.\ the  {\em exponential-time hypothesis}~\cite{impagliazzo2001}?

\section*{Acknowledgments}

We thank the anonymous reviewers for their helpful suggestions on how to improve the presentation. The first author is partially supported by the Swedish research council under grant VR-2022-03214 and VR-2025-04487.

\bibliographystyle{apacite}
\bibliography{references}
\newpage

\end{document}